\newtheorem{theorem}{Theorem}
\newtheorem{remark}{Remark}
\newtheorem{lemma}{Lemma}
\newcommand{\mf}{\mathbf}
\newcommand{\mc}{\mathcal}
\newcommand{\mb}{\mathbb}
\newcommand{\mr}{\mathrm}
\newlist{enumeratebyten}{enumerate}{1}
\setlist[enumeratebyten]{label={\textbf{Q\arabic*:}},ref={Q\arabic*:}}
\begin{document}

\title[\textit{Jam Sessions}: Advancd Jamming Attacks in MIMO Networks]{\textit{Jam Sessions:} Analysis and Experimental Evaluation of Advanced Jamming Attacks in MIMO Networks}

\author{Liyang Zhang}

\affiliation{%
  \institution{Northeastern University}
  \city{Boston} 
  \state{MA}
    \country{USA}
  \postcode{02115}
}
\email{liyangzh@ece.neu.edu}

\author{Francesco Restuccia}
\affiliation{%
  \institution{Northeastern University}
  \city{Boston} 
  \state{MA}
    \country{USA}
  \postcode{02115}
}
\email{frestuc@northeastern.edu}

\author{Tommaso Melodia}
\affiliation{%
  \institution{Northeastern University}
  \city{Boston} 
  \state{MA} 
  \country{USA}
  \postcode{02115}
}
\email{melodia@northeastern.edu}

\author{Scott M. Pudlewski}
\affiliation{%
  \institution{Air Force Research Laboratory}
  \city{Rome} 
  \state{NY} 
  \country{USA}
  \postcode{02115}
}
\email{scott.pudlewski.1@us.af.mil}

\renewcommand{\shortauthors}{L.~Zhang et al.}


\begin{abstract}
Recent research advances in wireless security have shown that advanced jamming can significantly decrease the performance of wireless communications. In advanced jamming, the adversary intentionally concentrates the available energy budget on specific critical components (\textit{e.g.}, pilot symbols, acknowledgement packets, etc.) to (i) increase the jamming effectiveness, as more targets can be jammed with the same energy budget; and (ii) decrease the likelihood of being detected, as the channel is jammed for a shorter period of time. These key aspects make advanced jamming very stealthy yet exceptionally effective in practical scenarios. 

One of the fundamental challenges in designing defense mechanisms against an advanced jammer is understanding which jamming strategies yields the lowest throughput, for a given channel condition and a given amount of energy. To the best of our knowledge, this problem still remains unsolved, as an analytic model to quantitatively compare advanced jamming schemes is still missing in existing literature. To fill this gap, in this paper we conduct a comparative analysis of several most viable advanced jamming schemes in the widely-used MIMO networks. We first mathematically model a number of advanced jamming schemes at the signal processing level, so that a quantitative relationship between the jamming energy and the jamming effect is established. Based on the model, theorems are derived on the optimal advanced jamming scheme for an arbitrary channel condition. The theoretical findings are validated through extensive simulations and experiments on a 5-radio 2x2 MIMO testbed. Our results show that the theorems are able to predict jamming efficiency with high accuracy. Moreover, to further demonstrate that the theoretical findings are applicable to address crucial real-world jamming problems, we show that the theorems can be incorporated to state-of-art reinforcement-learning based jamming algorithms and boost the action exploration phase so that a faster convergence is achieved.

\end{abstract} 

%

\begin{CCSXML}
<ccs2012>
<concept>
<concept_id>10010520.10010553</concept_id>
<concept_desc>Computer systems organization~Embedded and cyber-physical systems</concept_desc>
<concept_significance>500</concept_significance>
</concept>
<concept>
<concept_id>10002978.10003014.10003017</concept_id>
<concept_desc>Security and privacy~Mobile and wireless security</concept_desc>
<concept_significance>500</concept_significance>
</concept>
<concept>
<concept_id>10003033.10003079.10003082</concept_id>
<concept_desc>Networks~Network experimentation</concept_desc>
<concept_significance>300</concept_significance>
</concept>
</ccs2012>
\end{CCSXML}

\ccsdesc[500]{Computer systems organization~Embedded and cyber-physical systems}
\ccsdesc[500]{Security and privacy~Mobile and wireless security}
\ccsdesc[300]{Networks~Network experimentation}


\keywords{Wireless, Security, Jamming, Theory, Optimization, Model, Reinforcement Learning, Testbed, Experiments}

\maketitle

\section{Introduction} \label{sec:intro}


Wireless jamming is widely recognized as one of the most crucial topics in wireless security \cite{zou2016survey}. To understand how harmful a jammer could be, researchers have studied the worst-case jamming attack with fine-tuned temporal pattern (\textit{e.g.}, pulse jamming \cite{Wood2007}), frequency-pattern (\textit{e.g.}, frequency-hopping jamming \cite{Wu2012Cognitive}), and so on. During the last few years, a new family of \textit{advanced jamming} has gained momentum \cite{debruhl2013jam,DeBruhl-wisec2014,xiao2018reinforcement,anwar2017dynamic,pirzadeh2016subverting,yan2016jamming}, where the \emph{target component} of the jammed transmission is the main objective to optimize.

The key intuition behind advanced jamming is that, although some components of the wireless transmission do not carry payload information, they nevertheless constitute the ``Achilles' heel'' of the entire communication process. For example, at the physical (PHY) layer, \textit{pilot jamming} has been proposed to disrupt orthogonal frequency-division multiplexing (OFDM) and multiple-input and multiple-output (MIMO), since they heavily rely on accurate channel estimation through pilot symbols \cite{ClancyICC11, LaPanMILCOM12,pirzadeh2016subverting,wang2018multiple}. Furthermore, at the link (MAC) layer, acknowledgement (ACK) jamming has been proposed to disrupt medium access control operations \cite{BayrakINFOCOM08,ThuenteMILCOM06,WilhelmWiSec11}. Since advanced jamming activities are restricted to a specific period of time, an advanced jammer can degrade the network throughput with a comparatively lower energy budget yet with lower probability of being detected \cite{debruhl2013jam,DeBruhl-wisec2014}. 

The key limitation of existing work (discussed in details in Section \ref{sec:rel}) is that \textit{it does not provide the analytic tools to thoroughly investigate the quantitative relationship between the jamming energy and the jamming outcome}. As a consequence, literature still lacks a mathematical model to \emph{compare advanced jamming schemes, each with a different target component, to understand which one yields the highest jamming efficiency.} We point out that a mathematical model of advanced jamming attacks is of fundamental importance not only from a theoretical perspective, but for a number of practical reasons as well. First, it is straightforward to notice that jammers usually do not want their attacks to be discovered. Therefore, to increase stealthiness, advanced jammers need to keep the jamming signal energy below a certain threshold and corrupt the channel for as little time as possible. Furthermore, it is well known that jammed nodes usually react with strategies such as rate adaptation and rerouting \cite{Pelechrinis2009, Zhang15CNS,zhang2018taming,zhang2017learning}. Therefore, real-world advanced jammers necessarily face dynamic, time-varying scenarios, where adaptive jamming strategies are almost mandatory. In these circumstances, analytic tools on most effcient advanced jamming schemes are extremely valuable for an advanced jammer. 

As a first attempt to address the existing research gap, this article focuses on widely-used MIMO networks, and makes the following contributions:

\textbf{(1)} We select three most viable schemes in MIMO networks, each with a different target component, and rigorously model them at the signal-processing level, so that a mathematical relationship between the \emph{amount of jamming energy} used and the \emph{throughput degradation} is established for an arbitrary channel condition. Based on the model, we compare the jamming schemes and derive theorems on optimal jamming strategies in various scenarios. Our main theoretical results conclude that (i) for a given data packet, the relative efficiency for barrage jamming and pilot jamming is decided by the pilot sequence length and the number of transmitting antennae (Theorem \ref{thm1}); (ii) under some conditions, pilot jamming will lead to a lower average signal-to-inteference-and-noise ratio (SINR) of the ACK packet than directly jamming the ACK packet itself (Theorem \ref{thm2}); and (iii) ACK jamming can be compared with pilot and barrage jamming, on the basis of packet error rate (PER) lower bound, and the result is decided by a number of factors including pilot length, modulation and coding scheme (MCS) used for both data and ACK packets, and the distances from the jammer to the transmitter and receiver (Theorem \ref{thm3}). Our theoretical foundations are validated through extensive simulations and experiments on MIMO testbed made up by 5 USRP software-defined radios. Results indicate that our model is able to capture the behavior of advanced jamming strategies in complex MIMO scenarios accurately.

\textbf{(2)} To demonstrate that applicability of our theoretical results in real-world scenarios, where the theorems' assumptions cannot be validated either due to missing information or dynamic environment, we propose a way to incorporate the theorems to state-of-art reinforcement-learning \cite{sutton1998reinforcement,Restuccia-IoT2018,jagannath2019machine} based jamming algorithms, such as~\cite{Amuru2016}. Specifically, \textit{we show that the theorems on jamming efficiency can be used by the jammer to improve the efficiency of the action space exploration up to a significant extent.} Indeed, extensive simulations show the effectiveness of our approach and show significant improvement in both convergence speed and the total reward.\vspace{-0.3cm}

\subsection*{Scope and Limitations} 

We point out that the objective of our study is not to investigate every possible component in a wireless network that an advanced jammer can target. Instead, we focus on the modeling and analysis on a typical wireless scenario (MIMO networks) and several typical advanced jamming schemes, and show that (i) theoretical analysis based on rigorous model can predict the worst-case jamming results for certain scenarios; (ii) the theorems can be incorporated with practical, reinforcement-learning based jamming algorithms, thus extending their applicability to scenarios with limited information or dynamic environment. 

Moreover, we do not claim that our results provide jamming strategies that are optimal on \emph{every} possible aspect. As we have pointed out earlier, there are also other aspects of jamming strategy that can be optimized, such as temporal pattern and frequency pattern of the jamming signal. We have focused on the relatively under-explored aspect of jamming target, and provided insights on how to optimally choose the component of the wireless transmission to jam. \vspace{-0.2cm}

\section{Related Work} \label{sec:rel}


As far as the physical layer is concerned, Clancy \cite{ClancyICC11} proposes to disrupt OFDM links using pilot jamming and pilot nulling attacks. La Pan \emph{et al.} \cite{LaPanMILCOM12} consider false preamble timing and preamble nulling attacks in OFDM. Moreover, Rahbari \textit{et al.} \cite{Rahbari2016} study the impact of jamming on OFDM frequency offset (FO).
Specific to MIMO systems, Sodagari \emph{et al.} \cite{Sodagari2013} propose an attack where the jamming signal transforms the estimated MIMO channel matrix into a singular matrix. Pirzadeh \emph{et al.} show in \cite{pirzadeh2016subverting} that the MIMO spectral efficiency can be significantly degraded by jamming both the training phase and the data transmission phase. Wang \emph{et al.} \cite{wang2018multiple}  propose a random channel training (RCT)-based secure transmission framework to address MIMO jamming.

Regarding the link layer, an advanced jammer can exploit the temporal pattern between transmissions as well as critical control packets. Bayrak \emph{et al.} \cite{BayrakINFOCOM08} consider a scenario where the jammer exploits the exponential backoff mechanism of IEEE 802.11, and show that the jammer can achieve higher efficiency if aware of the current backoff states of the users. Jamming against IEEE 802.11b is discussed by Thuente \emph{et al.} in \cite{ThuenteMILCOM06}, based on launching CTS jamming, ACK jamming, or DIFS wait jamming attacks. ACK jamming in IEEE 802.15.4 is also considered by Wilhelm \emph{et al.} in \cite{WilhelmWiSec11}. 

The core limitation of the above mentioned pioneering works is the approach of investigating an individual jamming scheme for a fixed environment. While the potential threat of various advanced jamming schemes have been revealed, the important question of optimal advanced jamming strategy, when multiple jamming schemes are available and the environment is time-varying, is still open.


The closest work to ours is \cite{debruhl2013jam,DeBruhl-wisec2014}, where the authors investigate energy-optimal jamming strategies to achieve stealthiness and increase effectiveness. Specifically, DeBruhl and Tague \cite{debruhl2013jam} show that energy-efficient short-form periodic jamming can seriously degrade communication capabilities without compromising the jammer's activity. DeBruhl \emph{et al.} \cite{DeBruhl-wisec2014} investigate finite-energy jamming games, where jammers choose among different actions (i.e., sleep, power and channel). The authors compute the games' Nash equilibria, and test the performance of the optimal strategies against random and adaptive strategies. These works, however, have not focused on jamming schemes directed at critical components of a transmission, and therefore differ significantly from ours.

\vspace{-2mm}
\section{System Model} \label{sec:mod}

In this section, we illustrate a typical model for MIMO wireless communication, upon which three jamming schemes are modeled. Some commonly used notations: we use $\{*\}^T$, $\{*\}^H$ and $\{*\}^{\dagger}$ to denote the transpose, Hermitian, and pseudo inverse of a matrix $\{*\}$; $\hat{\{*\}}$ and $\tilde{\{*\}}$ represent the estimate and estimation error of $\{*\}$; $I_{\{*\}}$ is used to denote a $\{*\} \times \{*\}$ unity matrix; and $\mb{E}\{*\}$ denotes the expectation of $\{*\}$.


{\bf Physical Layer.} We consider a MIMO link with forward (\textit{i.e.}, data) and backward (\textit{i.e.}, ACK) transmissions as illustrated in Fig. \ref{fig:mimo}.

\begin{figure}[!h]
	\centering
	\begin{tabular}{cc}
	    \includegraphics[width=0.225\textwidth]{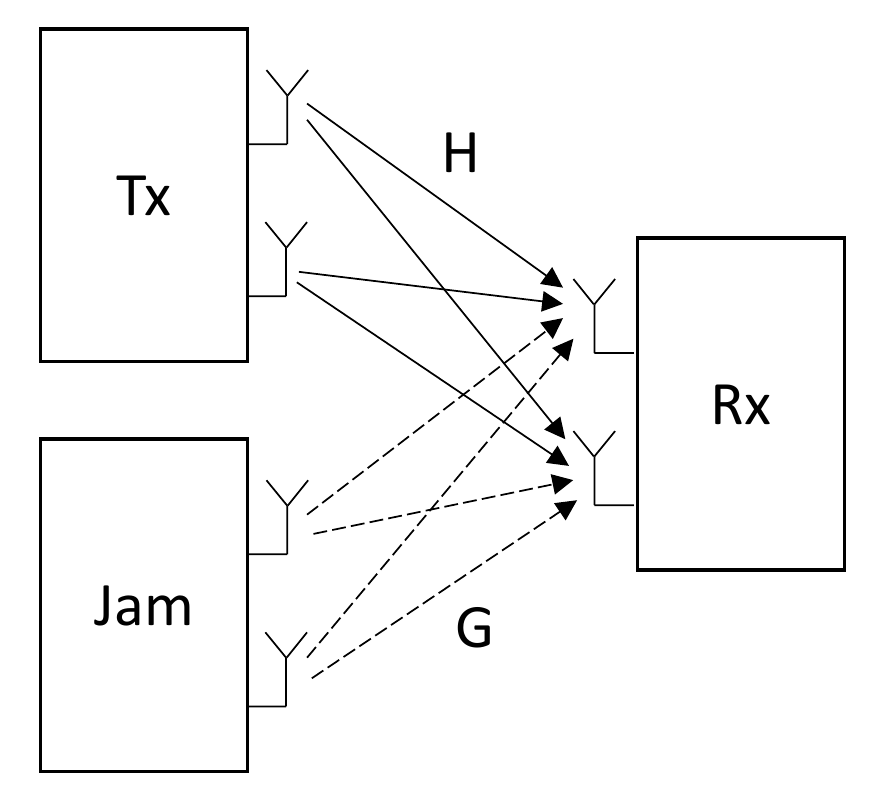} &
	    \includegraphics[width=0.225\textwidth]{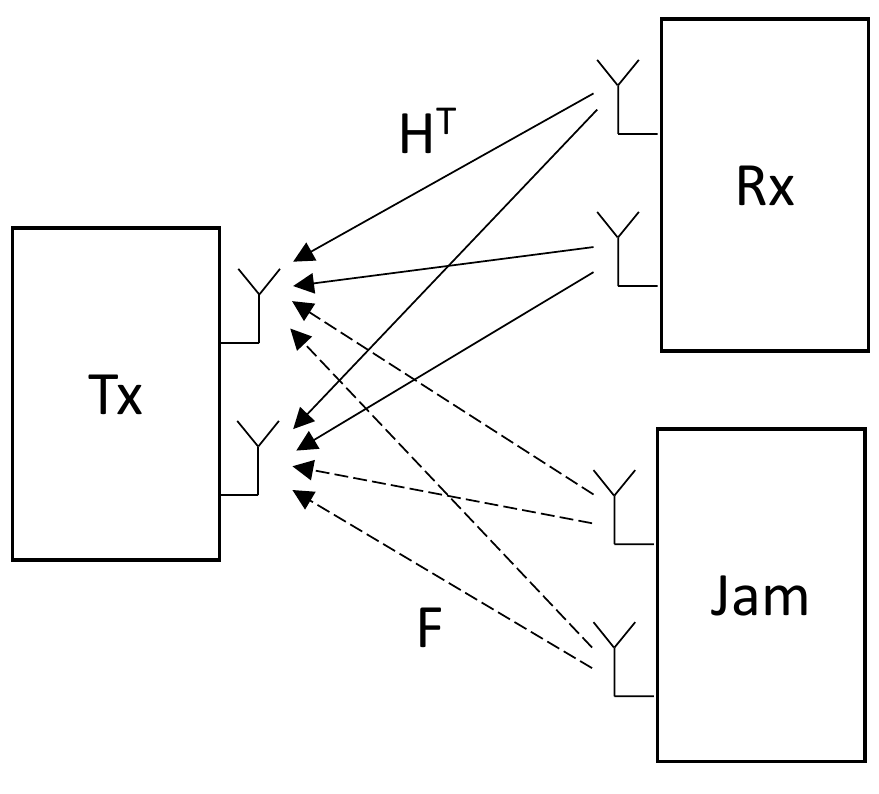} \\
	    \small (a) & \small (b)
	\end{tabular}
	\vspace{-2mm}
	\caption{\small MIMO link for (a) forward and (b) backward transmissions under jamming. Tx: transmitter; Rx: receiver; Jam: jammer.}
	\vspace{-2mm}
	\label{fig:mimo}
\end{figure}

Without loss of generality, we assume the transmitter, receiver, and the jammer are equipped with $M$, $N$, and $L$ antennae, respectively.\footnote{For simplicity, we will refer to the transmitter of the forward link as \emph{transmitter}, even when it is the receiver of the backward link. Similarly, the term \emph{receiver} will be used solely to address the receiver of the forward link.}  Therefore, the channel from the transmitter to the receiver, from the jammer to the receiver, and from the jammer to the transmitter can be denoted with matrices $H = \{h_{nm}\}_{1 \leq n \leq N, 1 \leq m \leq M}$, $G = \{g_{nl}\}_{1 \leq n \leq N, 1 \leq l \leq L}$, and $F = \{f_{ml}\}_{1 \leq m \leq M, 1 \leq l \leq L}$. Vectors $\mf{x} = \{x_1, \ldots, x_M\}^T$, $\mf{y} = \{y_1, \ldots, y_N\}^T$ and $\mf{z} = \{z_1, \ldots, z_L\}^T$ are used to represent the transmitted, received and jamming signals, respectively. while vector $\mf{w} = \{w_1, \ldots, w_N\}^T$ represents Gaussian noise at the receiver, with spectral density $N_0$.

We assume that the channels are subject to path loss and Rayleigh fading. The path loss is determined by the physical distance of the communicating parties, and therefore remains the same for every entry in the same channel matrix. We will use $\theta_{\{*\}}$ to denote the path loss of the channel $\{*\}$. Combined with the Rayleigh fading component, it follows that
\begin{eqnarray}
	h_{nm} \sim \mc{CN}(0, \theta_H), & \ \forall 1 \leq n \leq N, \forall 1 \leq m \leq M, \label{eq:h_dist}\\
	g_{nl} \sim \mc{CN}(0, \theta_G), & \ \forall 1 \leq n \leq N, \forall 1 \leq l \leq L, \label{eq:g_dist}\\
	f_{ml} \sim \mc{CN}(0, \theta_F), & \ \forall 1 \leq m \leq M, \forall 1 \leq l \leq L, \label{eq:f_dist}
\end{eqnarray}
i.e., the entries of the channel matrix are independent and identically distributed (i.i.d.) complex Gaussian random variables, with zero mean and variance equal to the path loss.

MIMO can be used to achieve multiplexing or diversity gain \cite{Zheng03}. For the data transmission on the forward link, since high throughput is usually required, we focus on spatial multiplexing, where different bits are transmitted on the $M$ antennae simultaneously (we will refer to them as $M$ spatial ``channels''). Following \cite{Proakis2007}, the baseband model for this scheme is

\begin{equation}\label{eq:mimo}
	\mf{y} = \left\{
		\begin{array}{ll}
			H \mf{x} + \mf{w}, & \mr{not\ jammed,} \\
			H \mf{x} + G \mf{z} + \mf{w}, & \mr{jammed.}
		\end{array}
	\right.
\end{equation}

Let $E_s$ and $E_j$ denote the average symbol energy for the transmitted and jamming signals. Then, it follows that
\begin{align}
	\mb{E}\{\mf{x}\mf{x}^H\} & = E_s I_M, \\  
	\mb{E}\{\mf{z}\mf{z}^H\} & = E_j I_L.
\end{align}

For the ACK transmission on the backward link, since high reliability is often required, we focus on MIMO schemes that achieve diversity instead of spatial multiplexing. To this end, we consider a scheme where diversity is achieved via beamforming and Maximal Ratio Combining (also called as MIMO-MRC). Specifically, the receiver leverages the channel information acquired in the previous forward transmission, and beamforms using a vector $\mf{u}$ satisfying
\begin{equation}
	(\hat{H}^T)^H \hat{H}^T \mf{u} = \lambda_{\max} \mf{u}, 
\end{equation}
\textit{i.e.}, the eigenvector corresponding to the maximal eigenvalue ($\lambda_{\max}$) of the matrix $(\hat{H}^T)^H \hat{H}^T$. A symbol $x$ is precoded by $\mf{u}$, and the resultant $\mf{u} x$ is transmitted over the antennae. Therefore, we have the baseband model
\begin{equation} \label{eq:mimo_ack}
	\mf{y} = H^T \mf{u} x + F \mf{z} + \mf{w}.
\end{equation}
Without loss of generality, we assume that $\| \mf{u} \|^2 = N$, so that the total transmitting energy across the $N$ antennae is $N E_s$, for $\mb{E} \{x^2\} = E_s$.

{\bf Link Layer.} Received data packets are acknowledged by the receiver on link layer. To establish a relationship between physical layer metrics and the PER, we borrow the model in \cite{Liu2004TWC}. Specifically, suppose a MCS $z$ is chosen from a set $\mc{Z}$, the PER $e$, as a function of SINR $\gamma$, can be approximated as

\begin{equation}\label{eq:per}
	e(\gamma | z) = \left\{
		\begin{array}{ll}
			1, & \gamma \leq \gamma_{\mr{th}}, \\
			a_z e ^{-b_z \gamma}, & \gamma > \gamma_{\mr{th}},
		\end{array}
	\right.
\end{equation}
with MCS-dependent parameters $a_z$ and $b_z$. The threshold $\gamma_{\mr{th}}$ also varies with different MCSs.

\vspace{-2mm}
\subsection{Advanced Jamming Schemes}\label{sec:schemes}

The objective of the jammer is to efficiently degrade the throughput of the jammed link. To this end, the jammer may (i) directly inject interference to the entire data packet and lower the achievable SINR (we will refer to this scheme as \emph{barrage jamming}, following \cite{ClancyICC11}); (ii) jam the pilot symbols to invalidate the estimated channel, also called \emph{pilot jamming}  \cite{zhang2018pilot}; or (iii) prevent ACKs from being delivered, also called \emph{ACK jamming} \cite{lichtman2016communications}. We will formally model each of these cases, and analyze the resulting effects.

{\bf Barrage Jamming.} In this case, the jammer emits Gaussian noise uniformly on the entire data packet, lowering the resulting receiver SINR and consequently degrading the throughput of the jammed link. The baseband model in presence of jamming is shown in (\ref{eq:mimo}). We assume that when the jamming energy is uniformly allocated upon the entire data packet, the channel estimation error is negligible, \textit{i.e.}, $\hat{H} \approx H$. Therefore, the transmitted signal is recovered as follows:
\begin{equation}
		\hat{\mf{x}} = \hat{H}^{\dagger} \mf{y} \approx \mf{x} + H^{\dagger} G \mf{z} + H^{\dagger} \mf{w},
\end{equation}
where $H^{\dagger} = (H^H H)^{-1} H^H$ is the left pseudo inverse of $H$.

The SINR per symbol on the $m$-th spatial channel is then
\begin{equation} \label{eq:sinr_barrage}
\small
	\gamma_m = \frac{E_s}{E_j [H^{\dagger} G G^H (H^{\dagger})^H]_{mm} + N_0 [(H^H H)^{-1}]_{mm}}.
\end{equation}

{\bf Pilot Jamming.} The advanced jammer may also aim at jamming the pilot symbols, resulting in channel estimation errors and further impairing data decoding. The pilot signal comprises of a sequence of symbols agreed on by the transmitter and receiver, denoted as $X =\{\mf{x}_1, \ldots, \mf{x}_K\}$, assuming a sequence of length $K$ is used. Similarly, the received sequence, jamming sequence, and the noise sequence are $Y = \{\mf{y}_1, \ldots, \mf{y}_K\}$, $Z = \{\mf{z}_1, \ldots, \mf{z}_K\}$, and $W =\{\mf{w}_1, \ldots, \mf{w}_K\}$, respectively. Thus, for the pilot, we have
\begin{equation}
	Y = H X + G Z + W.
\end{equation}

Least square estimation gives the estimated channel matrix
\begin{equation}\label{eq:est}
	\hat{H} = H + G Z X^{\dagger} + W X^{\dagger},
\end{equation}
with $X^{\dagger} = X^H (XX^H)^{-1}$ as the right pseudo inverse of $X$. The estimation in (\ref{eq:est}) introduces an error term
\begin{equation}
	\tilde{H} = (G Z + W) X^{\dagger},
\end{equation}
which will affect the recovery of the received signal.


According to \cite{Biguesh2006} and \cite{Marzetta1999}, to achieve optimal estimation, the pilot must satisfy
\begin{equation}
	XX^H = K E_s I_M.
\end{equation}
With this condition, the statistical characteristic of the channel estimation error is given by Lemma \ref{lem1}.
\begin{lemma}\label{lem1}
	The channel estimation error $\tilde{H}$ satisfies $
		\mb{E} \{\tilde{H}\} = \mf{0}_{N \times M}$
	and
	\begin{equation} \label{eq:ch_err}
		\mb{E} \{ \tilde{H} A \tilde{H}^H \} = \frac{1}{K E_s} \mr{tr}(A) (E_j GG^H + N_0 I_N),
	\end{equation}
	for an arbitrary $M \times M$ matrix $A$.
\end{lemma}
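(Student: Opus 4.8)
The plan is to treat the channel realization $G$ and the known pilot matrix $X$ as deterministic, and to take expectations only over the random jamming symbols $Z$ and the receiver noise $W$, which are zero-mean and mutually independent. The first claim is then immediate: by linearity, $\mb{E}\{\tilde{H}\} = (G\,\mb{E}\{Z\} + \mb{E}\{W\})X^{\dagger}$, and since both $\mb{E}\{Z\} = \mf{0}$ and $\mb{E}\{W\} = \mf{0}$, we obtain $\mb{E}\{\tilde{H}\} = \mf{0}_{N \times M}$.

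For the covariance identity, I would start from $\tilde{H} = (GZ + W)X^{\dagger}$ and abbreviate $B = X^{\dagger} A (X^{\dagger})^H$, a $K \times K$ matrix that absorbs all dependence on $A$ and the pilot. Then
\[
	\tilde{H} A \tilde{H}^H = (GZ + W)\,B\,(GZ + W)^H = G Z B Z^H G^H + G Z B W^H + W B Z^H G^H + W B W^H.
\]
Because $Z$ and $W$ are independent and zero-mean, an entrywise computation of each cross term (e.g. $\mb{E}\{[ZBW^H]_{ab}\} = \sum_{i,j} B_{ij}\,\mb{E}\{Z_{ai}\}\mb{E}\{\bar{W}_{bj}\} = 0$) shows that both vanish, leaving $\mb{E}\{\tilde{H} A \tilde{H}^H\} = G\,\mb{E}\{Z B Z^H\}\,G^H + \mb{E}\{W B W^H\}$.

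The core technical step is a quadratic-form identity for the two surviving terms. Writing $[Z B Z^H]_{ab} = \sum_{i,j} Z_{ai} B_{ij} \bar{Z}_{bj}$ and using that the entries of $Z$ are uncorrelated across both the antenna index and the symbol index with per-entry variance $E_j$ — which is exactly the content of $\mb{E}\{\mf{z}_i \mf{z}_j^H\} = E_j \delta_{ij} I_L$ — only the diagonal pairings $i = j$, $a = b$ survive, giving $\mb{E}\{Z B Z^H\} = E_j\, \mr{tr}(B)\, I_L$, and identically $\mb{E}\{W B W^H\} = N_0\, \mr{tr}(B)\, I_N$. Substituting, $\mb{E}\{\tilde{H} A \tilde{H}^H\} = \mr{tr}(B)\,(E_j G G^H + N_0 I_N)$. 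It then remains only to evaluate $\mr{tr}(B)$ under the optimal-pilot condition $X X^H = K E_s I_M$: this forces $X^{\dagger} = \tfrac{1}{K E_s} X^H$, so by the cyclic property of the trace $\mr{tr}(B) = \tfrac{1}{(K E_s)^2}\,\mr{tr}(X^H A X) = \tfrac{1}{(K E_s)^2}\,\mr{tr}(A\, X X^H) = \tfrac{1}{K E_s}\,\mr{tr}(A)$, which yields the stated constant.

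I expect the quadratic-form identity to be the main obstacle rather than any of the surrounding algebra. The delicate point is recognizing that the second moments of $Z$ (and of $W$) are diagonal in both indices, so that the double sum collapses to $\mr{tr}(B)$ times the identity; once this is in place, the remainder is linearity of expectation together with a single application of the pilot constraint through the cyclic trace.
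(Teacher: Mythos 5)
Your proof is correct and follows essentially the same route as the paper's: expand $\tilde{H}A\tilde{H}^H$, kill the cross terms by independence and zero mean, collapse the two quadratic forms to $\mr{tr}(\cdot)$ times $E_j GG^H$ and $N_0 I_N$ respectively, and finish with the cyclic trace and the pilot condition $XX^H = KE_s I_M$. The only cosmetic differences are that you factor $G$ out of the expectation via $\mb{E}\{ZBZ^H\} = E_j\,\mr{tr}(B)\,I_L$ rather than working with the columns of $D = GZ$ directly, and that you defer the substitution $X^{\dagger} = \tfrac{1}{KE_s}X^H$ to the final trace evaluation instead of applying it up front.
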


\begin{proof}
Since $\mb{E} \{\tilde{H}\} = \mf{0}_{N \times M}$ is obvious,  we will focus on $\mb{E} \{\tilde{H} A \tilde{H}\}$. With the optimal training sequence $X X^H = K E_s I_M$, we have
\begin{equation}
	\begin{split}
		\mb{E} \{\tilde{H} A \tilde{H}^H\} = & \mb{E} \{(GZ + W) X^{\dagger} A 
(X^{\dagger})^H (GZ + W)^H \} \\
		= & \frac{1}{K^2 E_s^2} \mb{E}\{(GZ + W) X^H A X (GZ + W)^H\} \\
		= & \frac{1}{K^2 E_s^2} \left( \mb{E}\{ GZ X^H A X Z^HG^H \} \right. \\
		& \left. + \mb{E}\{ W X^H A X W^H \} \right) \\
		= & \frac{1}{K^2 E_s^2} \left( D B D^H + W B W^H \right),
	\end{split}
\end{equation}
where we denote $D = GZ$, and $B = X^H A X$.

For the interference $D$, we have
\begin{equation}
	\begin{split}
		D & = G Z \\
		& = \left[
			\begin{array}{ccc}
				g_{11} & \cdots & g_{1L} \\
				\vdots & \ddots & \vdots \\
				g_{N1} & \cdots & g_{NL}
			\end{array}
		\right] \cdot \left[
			\begin{array}{ccc}
				z_{11} & \ldots & z_{1K} \\
				\vdots & \ddots & \vdots \\
				z_{L1} & \ldots & z_{LK}
			\end{array}
		\right] \\
		& = \left[
			\begin{array}{ccc}
				\sum_l g_{1l} z_{l1} & \ldots & \sum_l g_{1l} z_{lK} \\
				\vdots & \ddots & \vdots \\
				\sum_l g_{Nl} z_{l1} & \ldots & \sum_l g_{Nl} z_{lK}
			\end{array}
		\right] \\
		& \triangleq \left( \mf{d}_1, \ldots, \mf{d}_K \right).
	\end{split}
	\vspace{-2mm}
\end{equation}
Column $k$ of $Z$ represents the jamming signal at time instant $k$. Therefore, they are mutually independent. Consequently, the columns of $D$ are also uncorrelated, and
\begin{equation}
	\mb{E} \{\mf{d}_k \mf{d}_{k'}^H\} = \left\{
		\begin{array}{ll}
			E_j GG^H, & k = k', \\
			\mf{0}_{N \times N}, & k \ne k'.
		\end{array}
	\right.
\end{equation}

Therefore,

\vspace{-2mm}
\begin{equation}
	\begin{split}
		\mb{E} \{ D B D^H\} 
		= & \mb{E} \left\{ (\mf{d}_1, \ldots, \mf{d}_K) \cdot \left[
			\begin{array}{ccc}
				b_{11} & \ldots & b_{1K} \\
				\vdots & \ddots & \vdots \\
				b_{K1} & \ldots & b_{KK}
			\end{array}
		\right] \right. \left. \cdot \left(
			\begin{array}{c}
				\mf{d}_1^H \\
				\vdots \\
				\mf{d}_K^H
			\end{array}
		\right) \right\} \\
		= & \mb{E} \left\{\sum_k b_{kk} \mf{d}_k \mf{d}_k^H\right\} =  E_j \mr{tr}(B) GG^H.
	\end{split}
	\vspace{-2mm}
\end{equation}

Since $B = X^H A X$, we have $\mr{tr}(B) = \mr{tr}(X^H A X) = \mr{tr}(X X^H A) = K E_s \mr{tr}(A)$
and $\mb{E} \{D B D^H\} = E_j K E_s \mr{tr}(A) GG^H$.

For the term $WBW^H$, since all the entries of $W$ are i.i.d, we have
\begin{equation}
	\mb{E} \{W B W^H\} = N_0 \mr{tr}(B) = N_0 K E_s \mr{tr}(A) I_N.
\end{equation}

Therefore,
\begin{equation}
	\mb{E} \{\tilde{H} A \tilde{H}^H\} = \frac{1}{K E_s} \mr{tr}(A) \left( E_j GG^H + N_0 I_N \right).
\end{equation}
\end{proof}

The subsequent data transmission is not affected by jamming, and complies with (\ref{eq:mimo}). Therefore, the decoded signal is 
\begin{equation}
	\hat{\mf{x}} = \hat{H}^{\dagger} \mf{y} \approx \mf{x} + H^{\dagger} \mf{w} - H^{\dagger} \tilde{H} \mf{x} -
H^{\dagger} \tilde{H} H^{\dagger} \mf{w}.
\end{equation}
The post-processing noise is then
\begin{equation}\label{eq:post_noise}
	\hat{\mf{w}} = H^{\dagger} \mf{w} - H^{\dagger} \tilde{H} \mf{x} -
H^{\dagger} \tilde{H} H^{\dagger} \mf{w},
\end{equation}
with an autocorrelation given by Lemma \ref{lem2}.
\begin{lemma} \label{lem2}
	The Autocorrelation of the post-processing noise (\ref{eq:post_noise}) is
	\begin{equation} \label{eq:pilot_noise}
	\small
		\hspace{-1mm}\mb{E} \{\hat{\mf{w}}\hat{\mf{w}}^H\} \approx \left(1 + \frac{1}{K}\right) N_0 (H^HH)^{-1} + E_j \frac{M}{K} H^{\dagger} GG^H (H^{\dagger})^H.\hspace{-3mm}
	\end{equation}
\end{lemma}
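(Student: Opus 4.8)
The plan is to substitute the post-processing noise (\ref{eq:post_noise}) directly into $\mb{E}\{\hat{\mf{w}}\hat{\mf{w}}^H\}$ and expand the product of the three summands into its nine outer-product terms. Writing $\hat{\mf{w}} = T_1 + T_2 + T_3$ with $T_1 = H^{\dagger}\mf{w}$, $T_2 = -H^{\dagger}\tilde{H}\mf{x}$, and $T_3 = -H^{\dagger}\tilde{H}H^{\dagger}\mf{w}$, the autocorrelation becomes $\sum_{i,j}\mb{E}\{T_i T_j^H\}$, and the whole derivation reduces to evaluating these expectations and discarding those that are either zero or of higher order.

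The key structural observation I would exploit is the independence of the pilot-phase and data-phase randomness: the estimation error $\tilde{H} = (GZ+W)X^{\dagger}$ is assembled only from the pilot-phase jamming $Z$ and pilot-phase noise $W$ (with $H$ and $G$ held fixed), so it is independent of the data-phase noise $\mf{w}$ and the data symbol $\mf{x}$, and both $\mf{w}$ and $\tilde{H}$ are zero-mean (the latter by Lemma \ref{lem1}). Inspecting the six off-diagonal products, each one contains an unpaired factor of either $\mf{w}$ or $\tilde{H}$ that is independent of all remaining factors; taking expectations therefore factors out $\mb{E}\{\mf{w}\}=\mf{0}$ or $\mb{E}\{\tilde{H}\}=\mf{0}_{N \times M}$, so every cross term vanishes and only the three diagonal terms $\mb{E}\{T_i T_i^H\}$ survive.

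For the retained terms I would proceed as follows. First, $\mb{E}\{T_1 T_1^H\} = H^{\dagger}\mb{E}\{\mf{w}\mf{w}^H\}(H^{\dagger})^H = N_0 (H^H H)^{-1}$, using $\mb{E}\{\mf{w}\mf{w}^H\}=N_0 I_N$ and $H^{\dagger}(H^{\dagger})^H = (H^H H)^{-1}$. Next, for $\mb{E}\{T_2 T_2^H\}$ I would condition on $\tilde{H}$, replace $\mb{E}\{\mf{x}\mf{x}^H\}$ by $E_s I_M$, and invoke Lemma \ref{lem1} with $A = I_M$ to evaluate $\mb{E}\{\tilde{H}\tilde{H}^H\}$; sandwiching the result between $H^{\dagger}$ and $(H^{\dagger})^H$ produces the jamming term proportional to $E_j H^{\dagger}GG^H(H^{\dagger})^H$ together with the accompanying thermal-noise contribution proportional to $N_0 (H^H H)^{-1}$. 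Finally, $\mb{E}\{T_3 T_3^H\}$ carries two factors of $\tilde{H}$; applying Lemma \ref{lem1} with $A = (H^H H)^{-1}$ shows that it comes with an extra multiplicative factor on the order of $N_0 / (K E_s)$ (times $\mr{tr}((H^H H)^{-1})$), so it is negligible in the high-SNR / long-pilot regime and is dropped, which is precisely the source of the ``$\approx$'' in (\ref{eq:pilot_noise}). Collecting the surviving pieces then yields the stated autocorrelation.

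I expect the main obstacle to be the careful bookkeeping of the cross-term cancellation rather than any single hard computation: one must track which randomness (pilot-phase versus data-phase) each factor depends on and invoke the right independence/zero-mean argument for each off-diagonal term, and one must justify that the doubly-perturbed term $\mb{E}\{T_3 T_3^H\}$ is legitimately negligible so that the first-order approximation remains self-consistent with the one already made in deriving (\ref{eq:post_noise}).
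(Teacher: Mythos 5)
Your proposal is correct and follows essentially the same route as the paper's proof: expand the three-term sum, discard the cross terms by independence and zero-mean, evaluate the two surviving quadratic terms via Lemma \ref{lem1} (with $A = E_s I_M$ and $A = N_0 (H^H H)^{-1}$ respectively), and drop the doubly-perturbed term using the high-SNR bound $(N_0/E_s)\,\mr{tr}((H^HH)^{-1}) \ll M$. Your handling of the vanishing cross terms is in fact more explicit than the paper's, which silently retains only the diagonal products.
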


\begin{proof}
    The autocorrelation can be derived as follows:
\begin{equation}
	\begin{split}
		\mb{E} \{\hat{\mf{w}}\hat{\mf{w}}^H\} = & \mb{E} \{(H^{\dagger} \mf{w} - H^{\dagger} \tilde{H} \mf{x} - H^{\dagger} \tilde{H} H^{\dagger} \mf{w}) \\
		& \cdot (H^{\dagger} \mf{w} - H^{\dagger} \tilde{H} \mf{x} - H^{\dagger} \tilde{H} H^{\dagger} \mf{w})^H\} \\
		=~& \mb{E} \{ H^{\dagger} \mf{w} \mf{w}^H (H^{\dagger})^H + H^{\dagger} \tilde{H} \mf{x} \mf{x}^H \tilde{H}^H (H^{\dagger})^H \\
		& + H^{\dagger} \tilde{H} H^{\dagger} \mf{w} \mf{w}^H (H^{\dagger})^H \tilde{H}^H (H^{\dagger})^H \} \\
		=~& N_0 (H^HH)^{-1} + \frac{M}{K} H^{\dagger} \left( E_j GG^H + N_0 I_N \right) (H^{\dagger})^H \\
		& + \frac{N_0}{KE_s} \mr{tr}( (H^HH)^{-1} ) H^{\dagger} ( E_j GG^H + N_0 I_N ) (H^{\dagger})^H
	\end{split}
\end{equation}

According to \cite{Wang2007TWC}, $\mr{tr}((H^HH)^{-1})$ is a small value typically no larger than $M$. Since we are focusing on high SNR (by SNR we mean signal to noise ratio, the SIR or SJR is not necessarily high) scenario, this implies that $\sfrac{N_0}{E_s} \mr{tr}((H^HH)^{-1}) << M$. In other words, the last term is much smaller than the second term and can be neglected. Therefore, we have
\begin{equation}
	\mb{E} \{\hat{\mf{w}}\hat{\mf{w}}^H\} \approx \left(1 + \frac{1}{K}\right) N_0 (H^HH)^{-1} + \frac{M}{K} E_j H^{\dagger} GG^H (H^{\dagger})^H.
\end{equation}
\end{proof}
	
Therefore, the ``effective'' SINR on the $m$-th spatial channel is
\begin{equation} \label{eq:sinr_pilot}
\small
	\gamma_m = \frac{E_s}{E_j \frac{M}{K} [H^{\dagger} GG^H (H^{\dagger})^H]_{mm} + \left(1 + \frac{1}{K}\right) N_0 [(H^HH)^{-1}]_{mm}}, 
\end{equation}

{\bf ACK Jamming.} With the baseband model for ACK transmission in (\ref{eq:mimo_ack}), the signal can be recovered by multiplying $\mf{u}^H(\hat{H}^T)^H$ with the received signal and normalizing. Therefore, at the transmitter side, we have
\begin{equation}
	\begin{split}
		\hat{x} & = \frac{\mf{u}^H (\hat{H}^T)^H (H^T \mf{u} x + F \mf{z} + \mf{w})}{\mf{u}^H (\hat{H}^T)^H \hat{H}^T \mf{u}} \\
		& = x - \frac{\mf{u} (\hat{H}^T)^H \tilde{H}^T \mf{u} x - \mf{u} (\hat{H}^T)^H (F \mf{z} + \mf{w})}{\mf{u}^H (\hat{H}^T)^H \hat{H}^T \mf{u}},
	\end{split}
\end{equation}
and the post-processing SINR is as shown in (\ref{eq:sinr_ack}), with $E_{j,p}$ and $E_{j,a}$ denoting the jamming energy per symbol in pilot jamming and ACK jamming.
\begin{equation} \label{eq:sinr_ack}
    \begin{split}
    \gamma_{\mr{ACK}} = \| \mf{u}^H (\hat{H}^T)^H \hat{H}^T \mf{u} \|^2 E_s / \{ \left[ \frac{E_{j,p}}{L} \mr{tr}(G^H G) + N N_0 \right] \\
	\cdot \mf{u}^H (\hat{H}^T)^H \hat{H}^T \mf{u} + \mf{u}^H (\hat{H}^T)^H (E_{j,a} F F^H + N_0 I_M) \hat{H}^T \mf{u} \}
    \end{split}
\end{equation}


\section{Theoretical Analysis of Optimal Jamming Target} \label{sec:ana}

In this section, we will quantitatively compare the jamming schemes modeled in Section \ref{sec:mod} for the optimal target in the sense of jamming efficiency. Since different targets (\textit{i.e.}, data packet, pilot, and ACK packet) differ in terms of their lengths, it is unfair to directly compare the jamming power. Instead, we compare the jamming effects caused by a \emph{unit} of jamming energy spent on each specific target, regardless of their lengths (\textit{e.g.}, we compare the effect of spending one unit of energy on pilot jamming vs ACK jamming).

\vspace{-2mm}
\subsection{Physical Layer Jamming}
At the physical layer, the objective of degrading throughput is equivalent to reducing the achievable SINR, for which the following theorem holds.

\begin{theorem}\label{thm1}
	For a unit of jamming energy, pilot jamming yields a lower SINR than barrage jamming if
	\begin{equation} \label{eq:cond1}
	    K < \sqrt{D\cdot M},
	\end{equation}
	where $D$ and $K$ are the lengths of the data packet and pilot in symbols, and $M$ is the number of antennae at the transmitter.
\end{theorem}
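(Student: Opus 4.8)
The plan is to turn the statement into an elementary comparison between the denominators of the two SINR formulas (\ref{eq:sinr_barrage}) and (\ref{eq:sinr_pilot}), after correctly accounting for how a single unit of jamming energy is distributed over targets of different lengths. The pivotal modeling observation is the normalization announced at the start of Section \ref{sec:ana}: one unit of energy spent on barrage jamming is spread uniformly over all $D$ data symbols, giving a per-symbol jamming energy $E_j = 1/D$, whereas the same unit concentrated on the pilot is spread over only $K$ symbols, giving $E_j = 1/K$. Inserting these two values into (\ref{eq:sinr_barrage}) and (\ref{eq:sinr_pilot}) respectively places both schemes on a common energy budget, which is what makes the comparison meaningful.

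Concretely, I would fix a spatial channel $m$ and abbreviate the two channel-dependent, nonnegative quantities $P_m = [H^{\dagger} G G^H (H^{\dagger})^H]_{mm}$ and $Q_m = [(H^H H)^{-1}]_{mm}$. Because both SINR expressions carry the identical numerator $E_s$, pilot jamming achieves the lower SINR exactly when its denominator is the larger one. Forming the difference of the two denominators (pilot minus barrage) with $E_j = 1/K$ and $E_j = 1/D$ substituted, the common $N_0 Q_m$ cancels and the expression collapses to
\begin{equation}
	\left( \frac{M}{K^2} - \frac{1}{D} \right) P_m + \frac{1}{K} N_0 Q_m .
\end{equation}
Pilot jamming is therefore strictly more effective on channel $m$ precisely when this quantity is positive.

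The final step is a sign analysis. Since $P_m$ and $Q_m$ are diagonal entries of the positive semidefinite matrices $H^{\dagger} G G^H (H^{\dagger})^H$ and $(H^H H)^{-1}$, both are nonnegative (indeed $Q_m > 0$ whenever $H$ has full column rank). The noise contribution $\frac{1}{K} N_0 Q_m$ is thus always positive and unconditionally favors pilot jamming---this is the residual effect of the channel-estimation noise inflation by the factor $(1 + 1/K)$ established in Lemma \ref{lem2}. The interference contribution $(\frac{M}{K^2} - \frac{1}{D}) P_m$ is nonnegative exactly when $M/K^2 \geq 1/D$, i.e. when $K \leq \sqrt{D \cdot M}$. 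Hence, whenever $K < \sqrt{D \cdot M}$, both terms are nonnegative and their sum is strictly positive, so the pilot denominator exceeds the barrage denominator on every spatial channel and (\ref{eq:cond1}) follows.

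I expect the only genuine subtlety to lie in the energy bookkeeping rather than in the algebra: one must track that the factor $M/K$ already present in (\ref{eq:sinr_pilot}) combines with the per-symbol energy $1/K$ to yield the coefficient $M/K^2$, which is then compared against the $1/D$ produced by the barrage per-symbol energy. Once this is set up, the conclusion is a channel-independent sign condition, which is exactly why the threshold $\sqrt{D \cdot M}$ is clean and makes no reference to the realizations $H$ or $G$. It is worth flagging that (\ref{eq:cond1}) is a sufficient (not necessary) condition: for $K > \sqrt{D \cdot M}$ the interference term changes sign and the outcome would depend on the relative magnitudes of $P_m$ and $Q_m$.
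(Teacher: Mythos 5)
Your proposal is correct and takes essentially the same approach as the paper: the paper's own proof simply substitutes the per-symbol energies $E_j = 1/(L\cdot K)$ and $E_j = 1/(L\cdot D)$ into (\ref{eq:sinr_barrage}) and (\ref{eq:sinr_pilot}) and declares the result immediate, and you have filled in exactly the denominator comparison it leaves implicit (your omission of the common factor $1/L$ is immaterial since it cancels in the comparison). Your added observation that the residual $\tfrac{1}{K}N_0 Q_m$ term makes (\ref{eq:cond1}) sufficient but not necessary is a correct and worthwhile refinement the paper does not state.
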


\begin{proof}
	For a unit of jamming energy, the values of jamming energy per symbol can be expressed as $E_j = 1 / (L\cdot K)$ and $E_j = 1 / (L \cdot D)$, with $L$ as the number of antennae at the jammer, for pilot jamming and barrage jamming, respectively. Plugging these into the SINR expressions in (\ref{eq:sinr_barrage}) and (\ref{eq:sinr_pilot}), the theorem follows immediately.
\end{proof}

\begin{remark}
\label{rem1}
    According to Theorem \ref{thm1}, for a given data packet and the number of transmitting antennae, the optimal choice between barrage jamming and pilot jamming is decided solely by the pilot sequence length. This matches intuition, since longer pilot sequences are more robust to jamming. As long as the pilot length $K \geq \sqrt{DM}$, there is no incentive for the jammer to launch a pilot jamming attack, and the optimal jamming strategy is reduced to barrage jamming.
    
    Note, here we are comparing on the basis that every data packet has a pilot. In reality, a pilot is supposed to cover a channel coherence period, during which there might be multiple data packets transmitted. In this case, the data packets can be treated as one single payload, in the sense of pilot jamming, since they share the same pilot.
\end{remark}

\begin{remark}
\label{rem2}
    The square-root form of data packet length $D$ in (\ref{eq:cond1}) may look anti-intuitive. A better interpretation is available if we rewrite (\ref{eq:cond1}) to the equivalent form of
    \begin{displaymath}
        \frac{M}{K} \cdot \frac{1}{K} > \frac{1}{D}.
    \end{displaymath}
    Note, with a unit jamming energy, $\frac{1}{K}$ and $\frac{1}{D}$ are the per-symbol jamming energy for pilot and barrage jamming, respectively. So Theorem \ref{thm1} essentially means that, the interference caused on the pilot signal is equivalently scaled by $\frac{M}{K}$ during the signal recovering phase for the following data packet. (For a in-depth understanding of this scale, please refer to the proofs of Lemma \ref{lem1} and \ref{lem2}.)
    
    This scaling reflects the way how pilot jamming works. The jamming signal added to the pilot introduces errors to the estimated channel matrix, which are ``transformed'' during the signal processing in signal recovering phase. As a result, for the recovered signal, the interference is equivalently scaled by $\frac{M}{K}$. Interestingly, the scaling effect is proportional to the number of transmitting antennae, and inversely proportional to the pilot length. Therefore, while increasing number of transmitting antennae increases the total throughput, it does not alleviate pilot jamming. On the contrary, it makes the situation worse.

\end{remark}

\vspace{-2mm}
\subsection{Link Layer Jamming}
At the link layer, the jammer aims at disrupting the transmission of ACK packets. Since a corrupted ACK leads to retransmission of the corresponding data packet, it is equivalent to a corrupted data packet in terms of the effective link throughput. To evaluate the jamming effect, we need to compare the PER for the data and ACK packets. However, substituting the complex forms of \emph{instantaneous} SINR of (\ref{eq:sinr_barrage}), (\ref{eq:sinr_pilot}) and (\ref{eq:sinr_ack}) into the PER-SINR function (\ref{eq:per}) produces intractable equations. To ease the analysis, we apply the \emph{expected} SINR in (\ref{eq:per}) instead, which produces a lower bound for the PER, as shown below.


With Rayleigh fading model, the entries of channel matrices $H$, $G$, and $F$ are i.i.d. complex Gaussian random variables, as shown in Eq. (\ref{eq:h_dist}), (\ref{eq:g_dist}), and (\ref{eq:f_dist}). The expected SINRs for the three jamming schemes in Eq. (\ref{eq:sinr_barrage}), (\ref{eq:sinr_pilot}), and (\ref{eq:sinr_ack}) are then derived as
\begin{align}
	\bar{\gamma}_{\mr{b}} & = \frac{E_s \cdot  \theta_H}{E_{j,a} \cdot \theta_G + N_0}, \label{eq:sinr_av_barrage} \\
	\bar{\gamma}_{\mr{p}} & = \frac{E_s \cdot \theta_H}{\frac{M}{K}\cdot  E_{j,p} \cdot \theta_G + (1 + \frac{1}{K}) \cdot N_0}, \label{eq:sinr_av_pilot} \\
	\bar{\gamma}_{\mr{a}} & = \frac{\mb{E}\{\lambda_{\max}\}\cdot  N \cdot  E_s \cdot \theta_H}{\frac{N^2}{K} (E_{j,p} \cdot \theta_G + N_0) + (L \cdot E_{j,a} \cdot  \theta_F + N_0)}, \label{eq:sinr_av_ack}
\end{align}
where we use subscripts $b$, $p$, and $a$ in $E_{j, \{*\}}$ to distinguish the per-symbol jamming energy for barrage jamming, pilot jamming, and ACK jamming, respectively.
With the PER model in (\ref{eq:per}), we approximate the PER as
\begin{equation} \label{eq:per_ineq}
	e \approx 1 - \mb{P}\{\gamma \geq \gamma_{\mr{th}}\} \geq 1 - \frac{\bar{\gamma}}{\gamma_{\mr{th}}},
\end{equation}
where the inequality comes from Markov's inequality. Apparently, (\ref{eq:per_ineq}) establishes a PER lower bound
\begin{equation} \label{eq:per_lower}
    e_{\mr{lwr}} = 1 - \frac{\bar{\gamma}}{\gamma_{\mr{th}}}.
\end{equation}

\begin{theorem} \label{thm2}
    When used exclusively, pilot jamming yields a higher PER lower bound for the following ACK packet than ACK jamming, with a unit jamming energy, if the following holds:
	\begin{equation} \label{eq:cond2}
		\frac{A \cdot \theta_G}{K \cdot  \theta_F} \geq \frac{K \cdot L}{N^2}
	\end{equation}
    where $A$ and $K$ are lengths of the ACK packet and the pilot in symbols; $\theta_G$ and $\theta_F$ denote the path loss from the jammer to the receiver and transmitter; $L$ and $N$ are the number of antennae at the jammer and the receiver, respectively.
\end{theorem}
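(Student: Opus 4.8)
The plan is to reduce the comparison of PER lower bounds directly to a comparison of the two expected ACK SINRs. Since the ACK packet carries the same MCS in both scenarios, its threshold $\gamma_{\mr{th}}$ is common, and the bound $e_{\mr{lwr}} = 1 - \bar{\gamma}/\gamma_{\mr{th}}$ from (\ref{eq:per_lower}) is monotonically decreasing in $\bar{\gamma}$. Hence ``pilot jamming yields a higher PER lower bound for the ACK'' is equivalent to $\bar{\gamma}^{(p)} \leq \bar{\gamma}^{(a)}$, where the superscripts denote exclusive use of pilot jamming and ACK jamming, respectively. The entire argument therefore takes place inside the single expression (\ref{eq:sinr_av_ack}).

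First I would fix the energy normalization exactly as in the proof of Theorem \ref{thm1}: spending one unit of jamming energy spread over the $L$ jammer antennae and the target's symbols gives per-symbol energy $E_{j,p} = 1/(L K)$ when the $K$-symbol pilot is attacked and $E_{j,a} = 1/(L A)$ when the $A$-symbol ACK is attacked. ``Exclusive'' use then means setting $E_{j,a}=0$ with $E_{j,p}=1/(LK)$ in the pilot-jamming case, and $E_{j,p}=0$ with $E_{j,a}=1/(LA)$ in the ACK-jamming case. Substituting each pair into (\ref{eq:sinr_av_ack}) yields two expressions with an identical numerator $\mb{E}\{\lambda_{\max}\}\,N\,E_s\,\theta_H$, so that $\bar{\gamma}^{(p)} \leq \bar{\gamma}^{(a)}$ collapses to the reverse inequality between the two denominators.

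Writing out the denominators, the pilot-jamming case gives $\frac{N^2}{K}\!\left(\frac{\theta_G}{LK}+N_0\right)+N_0$ and the ACK-jamming case gives $\frac{N^2}{K}N_0 + \frac{\theta_F}{A}+N_0$. The common terms $\frac{N^2}{K}N_0$ and $N_0$ cancel, leaving $\frac{N^2 \theta_G}{L K^2} \geq \frac{\theta_F}{A}$, which rearranges immediately to the claimed condition (\ref{eq:cond2}). The structural reason behind the form of the result is visible here: pilot energy enters the denominator amplified by the channel-estimation factor $N^2/K$ (through the beamforming on the corrupted $\hat{H}$), whereas ACK energy enters amplified only by $L$, and this asymmetry is what produces the quadratic dependence on $K$.

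Because the algebra is routine, the only genuine obstacle is the bookkeeping in the reduction to denominators. The step requiring care is verifying that the numerators truly coincide across the two schemes — in particular that the $\mb{E}\{\lambda_{\max}\}$ factor is treated as common and is not itself degraded by pilot jamming — so that it, together with the additive $N_0$ terms, cancels cleanly. Once (\ref{eq:sinr_av_ack}) is accepted as the common starting point, no estimate or inequality beyond this cancellation is needed, and the theorem follows directly.
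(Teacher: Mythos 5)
Your proposal is correct and follows essentially the same route as the paper: reduce the comparison of PER lower bounds to a comparison of the expected ACK SINRs in (\ref{eq:sinr_av_ack}) (since $\gamma_{\mr{th}}$ is common and $e_{\mr{lwr}}$ is decreasing in $\bar{\gamma}$), substitute $E_{j,p}=1/(LK)$ and $E_{j,a}=1/(LA)$, and compare the two jamming terms $\frac{N^2}{K}E_{j,p}\theta_G$ and $L\,E_{j,a}\theta_F$ in the denominator, which rearranges to (\ref{eq:cond2}). Your treatment is in fact slightly cleaner on one point — you observe that the noise terms cancel exactly rather than merely being ``ignored,'' and you explicitly flag that $\mb{E}\{\lambda_{\max}\}$ must be treated as common to both scenarios, an assumption the paper makes implicitly.
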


\begin{proof}
	 Since we are comparing the PER lower bound of the same packet (ACK), the term $\gamma_{\mr{th}}$ in (\ref{eq:per_lower}) holds the same for both jamming schemes. Therefore, we only need to compare the expected SINR in (\ref{eq:sinr_av_ack}). 
	 
	 For a unit of jamming energy, the jamming energy per symbol is $1 / (L\cdot K)$ and $1 / (L \cdot A)$, for pilot jamming and ACK jamming, respectively. Plugging these into the first and second terms in the denominator of Eq. (\ref{eq:sinr_av_ack}) and ignore the noise, it follows that, when (\ref{eq:cond2}) holds:
	\begin{displaymath}
		\frac{N^2}{K} \cdot E_{j,p}\cdot  \theta_G > L\cdot  E_{j,a} \cdot \theta_F,
	\end{displaymath}
	\textit{i.e.}, when used exclusively, pilot jamming yields a higher denominator than ACK jamming. A lower (\ref{eq:sinr_av_ack}) follows. Hence a higher (\ref{eq:per_lower}).
\end{proof}

\begin{remark}
\label{rem2}
    Theorem \ref{thm2} states that when the condition holds, pilot jamming alone will lead to a higher PER lower bound of the ACK packet than directly jamming the ACK packet itself. This is caused by incorrect CSI being propagated directly to the ACK transmission.
    
    Note the primary target of pilot jamming is the data packet. However, as long as the conditions in Theorem \ref{thm2} holds, the optimal jamming scheme to disrupt the following ACK packet is also pilot jamming (on the prior data packet). Therefore, under this condition, ACK jamming cannot be the optimal jamming scheme, and no jamming energy should be spent on it. Partly based on this theorem, the following theorems will give the optimal jamming schemes.
    
\end{remark}

\begin{theorem} \label{thm3}
	Let us assume that noise is negligible.  With a unit of  jamming energy, ACK jamming yields a higher PER lower bound than barrage jamming if
	\begin{equation} \label{eq:cond3}
		\frac{A \cdot \theta_G}{D \cdot \theta_F} < \frac{L \cdot \gamma_{\mr{th},a}}{\mb{E}\{\lambda_{\max}\} \cdot N \cdot  \gamma_{\mr{th},d}};
	\end{equation}
	and higher lower bound than pilot jamming if
	\begin{equation} \label{eq:cond4}
		\frac{A \cdot \theta_G}{K \cdot \theta_F} < \frac{K \cdot L \cdot  \gamma_{\mr{th},a}}{\mb{E}\{\lambda_{\max}\} \cdot M \cdot  N \cdot \gamma_{\mr{th},d}},
	\end{equation}
	where $A$, $D$, and $K$ are the lengths of the ACK packet, data packet, and pilot, respectively; $M$, $N$, and $L$ are the numbers of antennae at the transmitter, receiver, and  jammer, respectively; $\theta_F$ and $\theta_G$ are the path loss from the jammer to the transmitter and receiver; $\gamma_{\mr{th},d}$ and $\gamma_{\mr{th},a}$ are the SINR thresholds for data and ACK packets. $\lambda_{\max}$ is the maximum eigenvalue of $(\hat{H}^T)^H \cdot \hat{H}^T$.
\end{theorem}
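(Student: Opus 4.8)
The plan is to reduce both comparisons to inequalities between the expected SINRs, exactly as in the proof of Theorem~\ref{thm2}, but with one crucial difference: here the three schemes corrupt \emph{different} packets, so the SINR thresholds do not cancel. ACK jamming used alone (i.e. $E_{j,p}=0$) degrades the ACK packet, whose bound is governed by $\bar{\gamma}_{\mr{a}}$ in (\ref{eq:sinr_av_ack}) and the ACK threshold $\gamma_{\mr{th},a}$; barrage and pilot jamming degrade the data packet, governed by $\bar{\gamma}_{\mr{b}}$ and $\bar{\gamma}_{\mr{p}}$ in (\ref{eq:sinr_av_barrage}) and (\ref{eq:sinr_av_pilot}) together with the data threshold $\gamma_{\mr{th},d}$. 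Since the PER lower bound (\ref{eq:per_lower}) is $e_{\mr{lwr}} = 1 - \bar{\gamma}/\gamma_{\mr{th}}$, a \emph{higher} lower bound corresponds to a \emph{smaller} ratio $\bar{\gamma}/\gamma_{\mr{th}}$. Thus ``ACK jamming yields a higher PER lower bound than barrage jamming'' is equivalent to $\bar{\gamma}_{\mr{a}}/\gamma_{\mr{th},a} < \bar{\gamma}_{\mr{b}}/\gamma_{\mr{th},d}$, and the comparison against pilot jamming is the same with $\bar{\gamma}_{\mr{p}}$ in place of $\bar{\gamma}_{\mr{b}}$.

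First I would fix the per-symbol jamming energies for a single unit of total energy, spreading it over the target packet and the $L$ jammer antennae: $E_{j,a} = 1/(L\cdot A)$, $E_{j,b} = 1/(L\cdot D)$, and $E_{j,p} = 1/(L\cdot K)$. Invoking the hypothesis that noise is negligible, I drop every $N_0$ term from (\ref{eq:sinr_av_barrage})--(\ref{eq:sinr_av_ack}); in particular the $\frac{N^2}{K}N_0$ term and the additive $N_0$ in the denominator of $\bar{\gamma}_{\mr{a}}$ vanish, and because ACK jamming is used exclusively the $E_{j,p}\theta_G$ term drops as well. After substituting the unit-energy values above, this collapses the three expected SINRs to $\bar{\gamma}_{\mr{a}} = \mb{E}\{\lambda_{\max}\}\, N E_s \theta_H A / \theta_F$, $\bar{\gamma}_{\mr{b}} = E_s \theta_H L D / \theta_G$, and $\bar{\gamma}_{\mr{p}} = E_s \theta_H L K^2 / (M\theta_G)$.

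Finally I would form the two ratio inequalities. For (\ref{eq:cond3}) I insert $\bar{\gamma}_{\mr{a}}$ and $\bar{\gamma}_{\mr{b}}$ into $\bar{\gamma}_{\mr{a}}/\gamma_{\mr{th},a} < \bar{\gamma}_{\mr{b}}/\gamma_{\mr{th},d}$, cancel the common factor $E_s\theta_H$, and rearrange so that the path-loss ratio $\theta_G/\theta_F$ and the lengths $A$, $D$ land on the left, yielding $A\theta_G/(D\theta_F) < L\gamma_{\mr{th},a}/(\mb{E}\{\lambda_{\max}\} N \gamma_{\mr{th},d})$; for (\ref{eq:cond4}) I repeat with $\bar{\gamma}_{\mr{p}}$, where the extra factor $K^2/M$ produces the $K\cdot L/(\ldots M N)$ shape and the $A\theta_G/(K\theta_F)$ grouping. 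The main obstacle is bookkeeping rather than conceptual: I must keep the two thresholds $\gamma_{\mr{th},a}$ and $\gamma_{\mr{th},d}$ \emph{un-cancelled} (unlike Theorem~\ref{thm2}, where the same ACK packet left a single threshold), pair each scheme with the packet it actually corrupts and with the per-symbol energy drawn from that packet's length, and track the $\mb{E}\{\lambda_{\max}\}$, $N$, $M$, and $L$ factors correctly. Getting the $A$-versus-$D$ and $A$-versus-$K$ groupings right through the algebra is where an error is most likely to creep in.
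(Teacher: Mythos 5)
Your proposal is correct and follows the same route as the paper's own (much terser) proof: set $E_{j,b}=1/(LD)$, $E_{j,p}=1/(LK)$, $E_{j,a}=1/(LA)$, drop the noise terms in (\ref{eq:sinr_av_barrage})--(\ref{eq:sinr_av_ack}), and compare the resulting PER lower bounds with the packet-appropriate thresholds $\gamma_{\mr{th},a}$ and $\gamma_{\mr{th},d}$ kept separate. Your algebra for the collapsed SINRs and the rearrangements into (\ref{eq:cond3}) and (\ref{eq:cond4}) checks out.
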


\begin{proof}
	With a unit of jamming energy, the jamming energy per symbol is $E_{j,b} = 1 / (L D)$, $E_{j,p} = 1 / (L K)$, and $E_{j, a} = 1 / (L A)$, with $L$ as the number of antennae at the jammer, for barrage jamming, pilot jamming, and ACK jamming, respectively. Omitting the noise components and plugging them in (\ref{eq:sinr_av_barrage}), (\ref{eq:sinr_av_pilot}) and (\ref{eq:sinr_av_ack}), and applying them to (\ref{eq:per_lower}), the theorem follows.
\end{proof}

\begin{remark}
	Although (\ref{eq:cond4}) is sufficient to guarantee a higher PER lower bound of the ACK packet (as a result of ACK jamming) than that of the data packet (as a result of pilot jamming), it does not guarantee that ACK jamming is the best choice. Recall Theorem \ref{thm2} and Remark \ref{rem2}, pilot jamming also affects the PER of the following ACK packets, and sometimes yields a higher PER lower bound (on the ACK packet) than ACK jamming per unit jamming energy. Therefore, only when conditions in (\ref{eq:cond2}) and (\ref{eq:cond4}) both hold, is ACK jamming a better choice than pilot jamming.
\end{remark}

\begin{remark}
	The major factors affecting (\ref{eq:cond3}) and (\ref{eq:cond4}) include the SINR thresholds $\theta_{\mr{th}}$ for the MCSs, the component lengths $K$, $D$ and $A$, and the path loss $\theta_{\{*\}}$ for $G$ and $F$. This means the optimal jamming target varies with the MCSs used by the data and ACK packets, the pilot, payload, and ACK lengths, as well as the distance from the jammer to the transmitter/receiver. In a scenario where any of these factors are dynamic (e.g., in IEEE 802.11 with rate adaptation), optimal jamming strategy needs to be dynamic, too.
\end{remark}

\vspace{-2mm}
\section{Application to Practical Scenarios} \label{sec:alg}

The theorems presented in the previous section reveals that the optimal jamming target is decided by a set of variables. Obviously, if perfect information about these variables is available, the jammer can at all times optimize its strategy to maximize the impairment generated to the legitimate communication process. However, in practical scenarios, information describing these variables (\textit{e.g.}, pilot length, path loss, etc.) may be unknown or dynamic. Therefore, it remains unclear whether the theoretical results are applicable to practical scenarios. To answer this question, we resort to machine learning and show that the theorems can be incorporated to and enhance learning-based algorithms.

In the following, we will first introduce a variant of state-of-art reinforcement-learning based jamming algorithm; then, we design a novel method to improve the algorithm by using the theorems to boost the action exploration. In this way, we provide a way to apply the theoretical results to practical scenarios with unknown and dynamic environment.

\subsection{Reinforcement-Learning Based Jamming}
Reinforcement learning \cite{sutton1998reinforcement} allows the learning agent to adapt to the optimal action to maximize the reward in a certain environment through trial-and-error. It has drawn attentions from researchers and jamming algorithms based on it have been proposed in \cite{Amuru2016,ZhuanSun2017}. We will derive a similar reinforcement learning algorithm for a dynamic and interactive scenario described below.

Without loss of generality, we consider that before each transmission, the transmitter may choose (i) a MCS; and (ii) a route. The choice of MCS is guided by a rate adaption scheme that adjusts the MCS according to the achieved PER. The transmitter may also 
choose among a set of possible receivers for the next hop, depending on the link quality to each of them. As a result, the jammer faces a dynamic unknown environment where information such as MCS and path loss changes interactively with its own actions.

\begin{figure}
	\centering
	\includegraphics[width=0.4\textwidth]{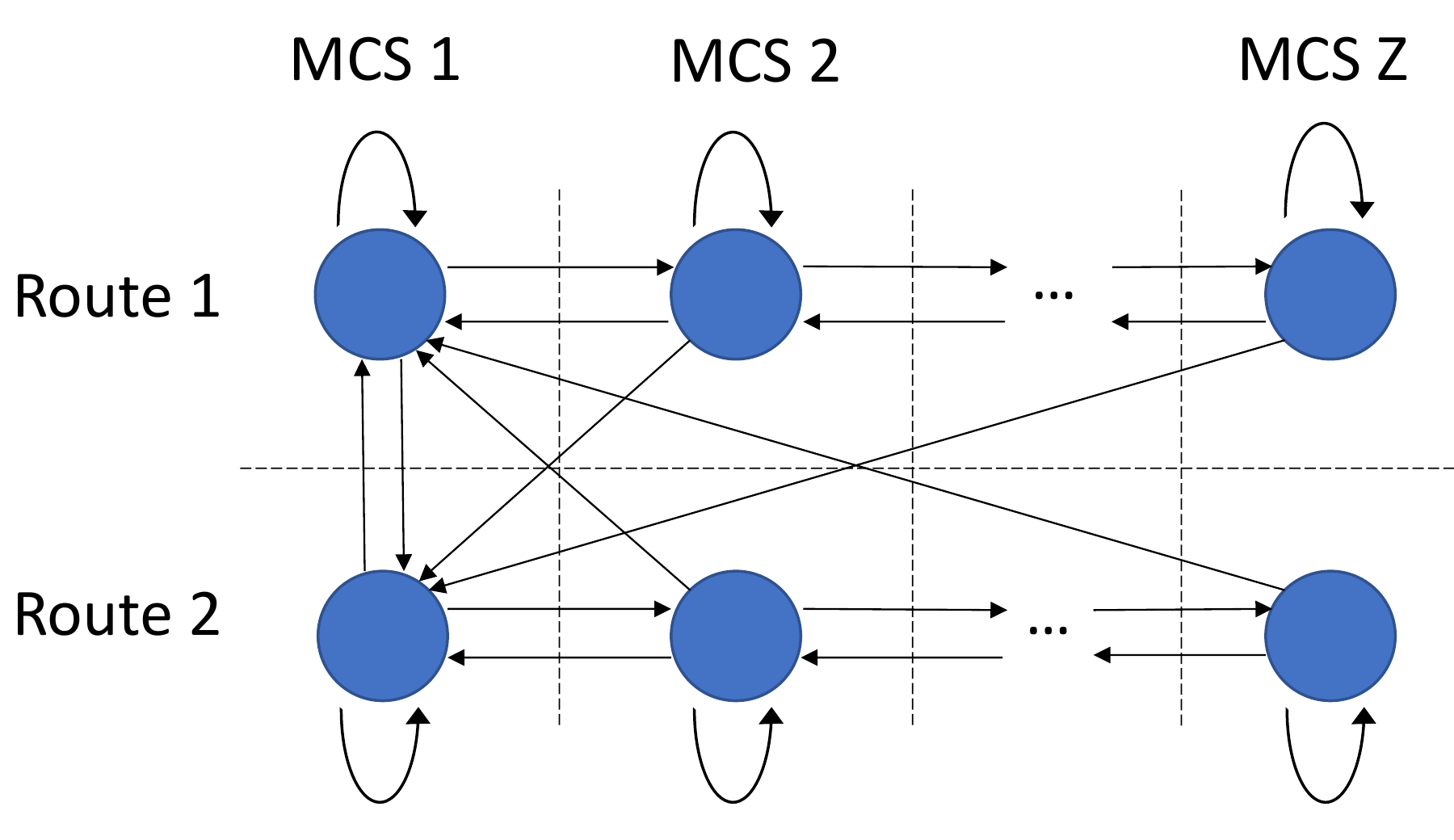}
	\caption{\small The advanced jamming problem as an MDP.}
	\vspace{-0.5cm}
	\label{fig:states}

\end{figure}

The scenario can be modeled as a Markov Decision Process (MDP), as shown in Fig. \ref{fig:states}
. Formally, it can be defined as a 5-tuple $\{\mc{S}, \mc{A}, \mb{P}(s,s'), \allowbreak R(s,a), \beta\}$, where $\mc{S}$ and $\mc{A}$ represent the \textit{state} set and \textit{action} set, $\mb{P}(s,s')$ is the transition probability from state $s$ to $s'$, $R(s,a)$ is the reward obtained by the decision agent for taking action $a$ at state $s$, and $\beta$ is a discount factor. 

In our model, the state set is the Cartesian product of the MCS set and route set. To capture the different jamming schemes discussed in Sections \ref{sec:mod}, we set the action space to a tuple $a = \{J_b, J_p, J_a, T_p\}$, where $J_b$, $J_p$, and $J_a$ denote the energy spent on barrage jamming, pilot jamming, and ACK jamming, respectively. We denote by $T_p$ the length of the pilot jamming signal, since the real length of the pilot $K$ is not directly observable.  


The objective of the jammer is to reduce the wireless node's throughput with maximum energy efficiency. Therefore, the reward should increase with the throughput degradation and decrease with the energy consumption. For simplicity, we use the degraded throughput subtracted by a ``price'' paid for the energy spent for the degradation as the reward. Specifically, $R(s, a) = n_e(a) - p (J_b + J_p + J_a)$, where $n_e(a)$ is the number of data packets lost as a result of the jamming action $a$ and $p$ is the price for a unit of energy. The objective of the jammer is to identify a policy $\pi: s \rightarrow a, \forall s \in \mc{S}, \forall a \in \mc{A}$, that maximizes the discounted sum of the instant and future rewards. We solve this Markov decision process (MDP) problem using the well-known Q-learning algorithm \cite{sutton1998reinforcement}, in which the optimum policy is found by iteratively updating the value function for each combination of state and action. We will omit the detailed learning algorithm here for space limit, and focus on the novel action exploration scheme.\vspace{-0.2cm}

\subsection{Theorem-Enhanced Action Exploration}
Reinforcement learning algorithms are guaranteed to converge, as long as certain conditions are met, \textbf{but the convergence speed heavily depends on the action exploration method \cite{sutton1998reinforcement}.} Indeed, at each iteration, the learning engine may choose either to (i) \textit{explore} the under-explored actions, or (ii) \textit{exploit} the actions already known to yield a good reward. Thus, an effective exploration method may improve the convergence of the learning process significantly \cite{Dearden1998}. \textbf{Since the theorems derived in Section \ref{sec:ana} establish that certain actions are more favorable than others in certain states, we use them to improve the learning exploration phase}.

According to Theorem \ref{thm1}, the jammer should limit the action space to pilot jamming with $T_p \leq \sqrt{D_{\max} \cdot M}$ and barrage jamming. Moreover, Theorem \ref{thm1} also provides a criterion to choose between pilot jamming and barrage jamming. Suppose the jammer has an estimate on both the data packet length $D$ and pilot length $K$, in the form of probability distribution $\mb{P} \{K = k\}$. Then, it can compute the probability that pilot jamming is more efficient than barrage jamming, as
\begin{equation} \label{eq:prob_b_jam}
	\mb{P}_b = \sum_{d \in \mc{D}, k \in \mc{K}} \mb{P} \{D = d\} \cdot \mb{P} \{k > \sqrt{d M}\}.
\end{equation}
Consequently, the jammer should choose barrage jamming with probability $\mb{P}_b$, and pilot jamming with $T_p = k$ with probability
\begin{equation} \label{eq:prob_p_jam}
	\mb{P}_p = (1 - \mb{P}_b) \cdot \mb{P} \{K = k\}.
\end{equation}

At the link layer, Theorem \ref{thm2} and Theorem \ref{thm3} specify conditions for the jammer to use ACK jamming, and can be used to prune unfavorable pilot jamming or barrage jamming actions. To be specific, when
\begin{equation} \label{eq:cond6}
	\frac{A \cdot \theta_G}{K \cdot \theta_F} < \min \left( \frac{K \cdot L}{N^2}, \frac{K \cdot L \gamma_{\mr{th},a}}{\mb{E}\{\lambda_{\max}\} \cdot M \cdot N \cdot \gamma_{\mr{th},d}} \right)
\end{equation}
holds, ACK jamming is more favorable than pilot jamming; and when Eq. (\ref{eq:cond3}) holds, ACK jamming is more favorable than barrage jamming. Since Theorem \ref{thm2} and \ref{thm3} are valid for lower bounds on the PER,  we let the jammer explore ACK jamming when the conditions hold with probability $\epsilon$. Variables $\theta_G$, $\theta_F$ can be estimated by averaging the signal strength received from the transmitter and receiver, given the knowledge of transmitting power. $\mb{E} \{\lambda_{\max}\}$, $\gamma_{\mr{th},a}$, and $\gamma_{\mr{th},d}$ can be computed offline. To summarize, the procedure for enhanced exploration is described as a decision tree in Fig. \ref{fig:decision_tree}, and reported in detail in Algorithm \ref{alg:exp}.

\begin{figure}
    \centering
    \includegraphics[width=0.45\textwidth]{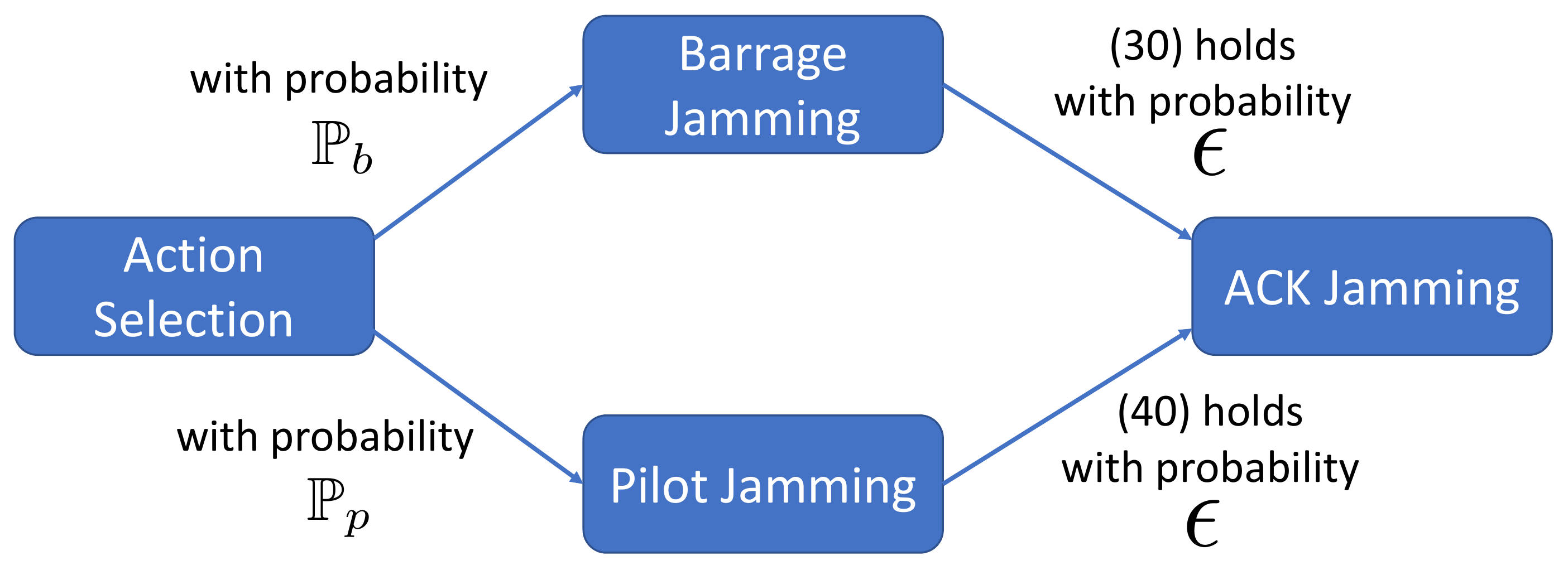}
    \caption{\small Decision tree for action selection.}
    \vspace{-2mm}
    \label{fig:decision_tree}
\end{figure}

\begin{algorithm}[h!]
\small
	\caption{Enhanced Exploration Algorithm}
	\label{alg:exp}
	\begin{algorithmic} 
		\STATE Initialize $\mb{P}\{K = k\}, k < \sqrt{D_{\max} M}$;
		\FOR {each iteration of Q-Learning Algorithm}
		\STATE Choose barrage jamming with probability (\ref{eq:prob_b_jam}), and $T_p = k$-length pilot jamming with probability (\ref{eq:prob_p_jam});
		\IF {barrage jamming is chosen and (\ref{eq:cond3}) holds}
		\STATE Choose ACK jamming with probability $\epsilon$;
		\ENDIF
		\IF {$T_p = k$-length pilot jamming is chosen and (\ref{eq:cond6}) holds}
		\STATE Choose ACK jamming with probability $\epsilon$;
		\ENDIF
		\STATE Choose jamming energy and set the action;
		\STATE Update $\mb{P}\{K = k\}$ if pilot jamming is chosen, according to the jamming result;
		\ENDFOR
	\end{algorithmic}
\end{algorithm}
\normalsize


\section{Performance Evaluation}

We first validate our theorems in Section \ref{sec:validation}, followed by a testbed evaluation in Section \ref{sec:testbed} and by an evaluation of the learning-based jamming algorithm in Section \ref{sec:eva}. \vspace{-0.2cm}

\subsection{Theoretical Validation}\label{sec:validation}

To validate the theorems in Section \ref{sec:ana}, we simulate advanced jamming schemes on a MIMO link. We consider a scenario where the transmitter, receiver, and the jammer all equipped with $2$ antennae. The data packet size is set to $1024\:\mr{bits}$, and $240$ packets are aggregated to one frame, i.e., with one pilot. The pilot length may be $4$, $16$, $128$, or $512$ symbols long. ACK packets are assumed to be $512\:\mr{bits}$ long. The data packets can be modulated with BPSK, QPSK, 16QAM, or 64QAM, while ACK packets are modulated with BPSK.


We evaluate barrage jamming, pilot jamming, and ACK jamming with normalized jamming energy between $0$ and $20$, where the energy is normalized by the \emph{energy per transmitted symbol}. Note, for barrage jamming on BPSK-modulated signals, a jamming energy of $20$ is equivalent to a signal-to-jamming ratio (SJR) of $37.9\:\mr{dB}$. In other words, we focus on an energy range that is negligible for traditional barrage jamming.


Due to space limit, we only show the results for BPSK-modulated data packets in Fig. \ref{fig:ber}. We use bit error rate (BER) as the metric, since it is easy to measure, and is directly decided by SINR, the metric used in both Theorem \ref{thm1} and \ref{thm2}. Fig. \ref{fig:ber} shows that with small pilot length, pilot jamming significantly outperforms barrage jamming. We also notice that the performance gain decreases with increasing  pilot length -- with pilot length of $512$ symbols, pilot jamming becomes similar to barrage jamming on performance.  This matches the result predicted by Theorem \ref{thm1}, which states that pilot jamming is more energy-efficient than barrage jamming when ${K < \sqrt{DM}}$ -- in this case, the theoretical crossover point is approximately $495$. The effect of pilot jamming on the following ACK packet can also be verified in Fig. \ref{fig:ber}. For $K = 4$, pilot jamming results in higher BER than ACK jamming with the same energy. However, the advantage becomes negligible for $K = 16$ and ACK jamming becomes better for $K = 128$.  This observation validates Theorem \ref{thm2}, which states that pilot jamming is better than ACK jamming if ${K < \sqrt{A \theta_G N^2 / (L \theta_F)}}$ -- the point in this case is $32$. 

Fig. \ref{fig:ber} does not precisely match Theorem \ref{thm3}, but we argue that the lower bound of PER is not always reflected well by BER. Actually, with larger jamming energy (not shown due to space limit), the results in BER matches Theorem \ref{thm3} much better, suggesting that Theorem \ref{thm3} is more accurate with high jamming energy.



\vspace{-2mm}
\subsection{Experimental Testbed Evaluation}\label{sec:testbed}
We have implemented a $2\times2$ MIMO system described in Section \ref{sec:mod} using 4 USRP N210s and 1 USRP X310. For the forward (data) transmission, we use the I/Q data from the USRPs to evaluate different jamming schemes. To better control the jamming energy, we manually add barrage and pilot jamming signals to the received samples. The signal processing is run in Matlab, with BER computed as the metric. Due to the limitation on computation speed, we are unable to perform the backward (ACK) transmission in real-time, immediately after a data packet is received. We emulate the ACK transmission and ACK jamming by leveraging channel information estimated from the received samples to perform beamforming, and compute the received ACK signal with jamming signal added. The BER is computed after signal processing of the ACK packet.

\begin{figure}[!h]
    \centering
    \includegraphics[width=\columnwidth]{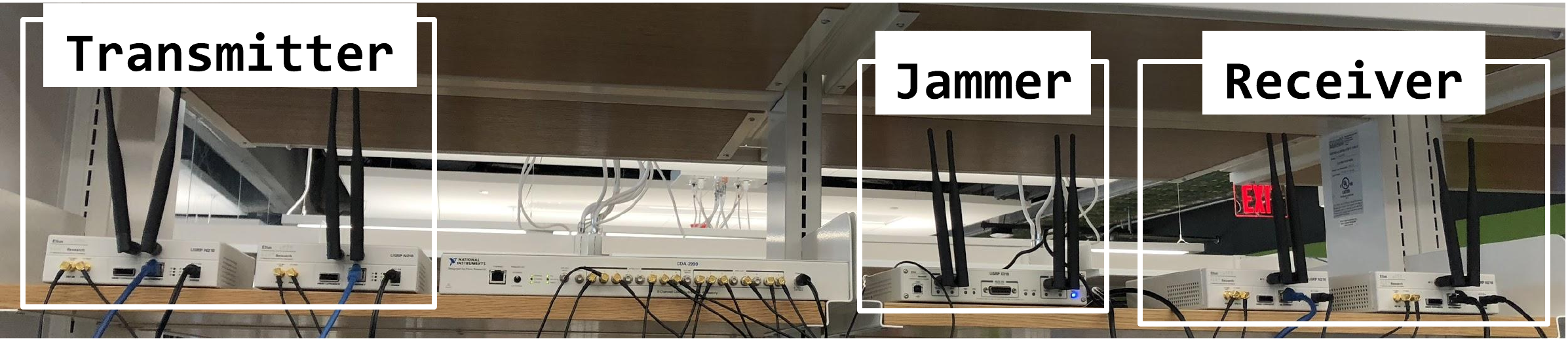}
    \caption{MIMO Experimental Testbed.}
    \label{fig:testbed}
\end{figure}

\begin{figure*}
    \centering
    \begin{tabular}{cccc}
        \includegraphics[width=0.23\textwidth]{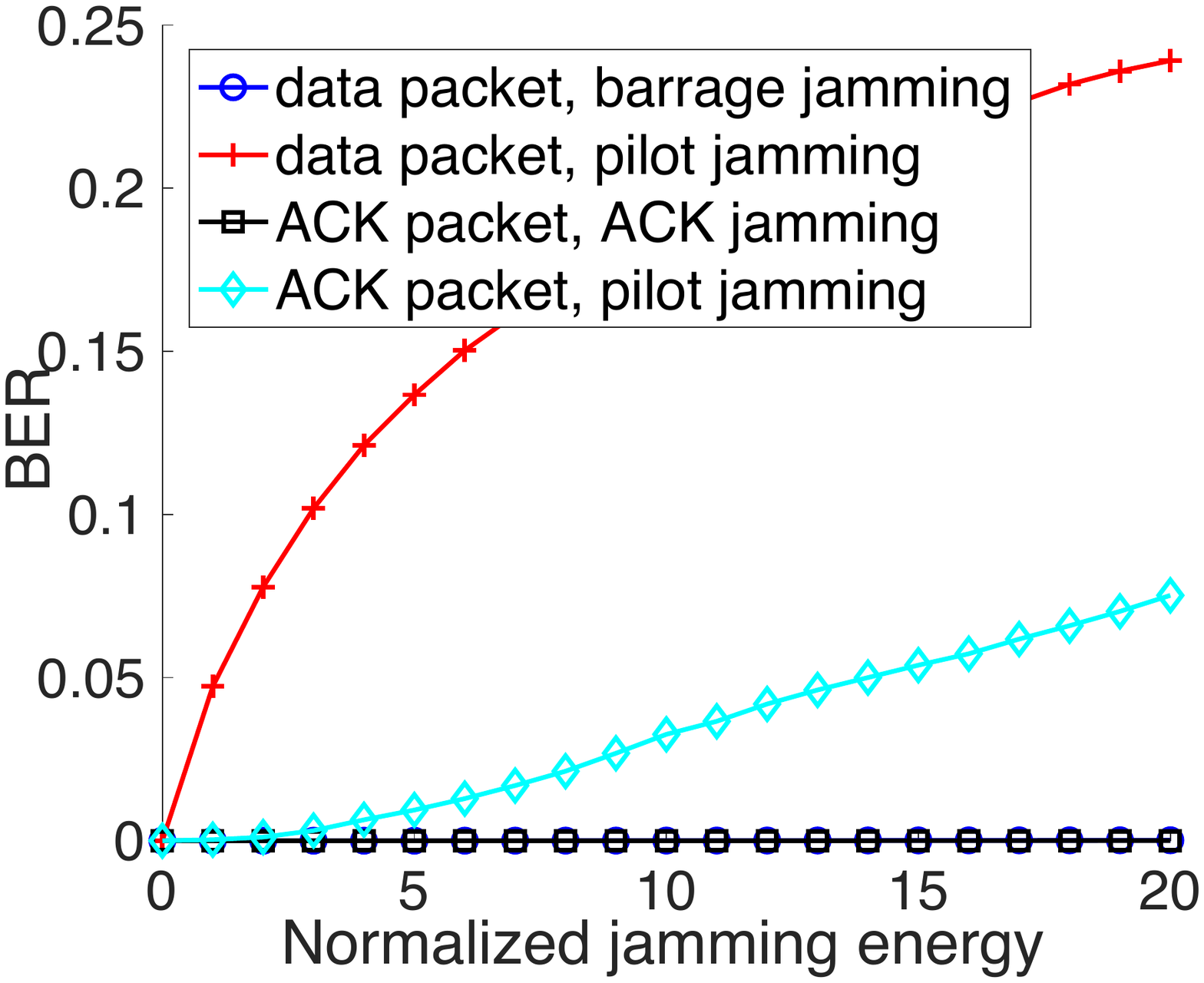} & 
        \includegraphics[width=0.23\textwidth]{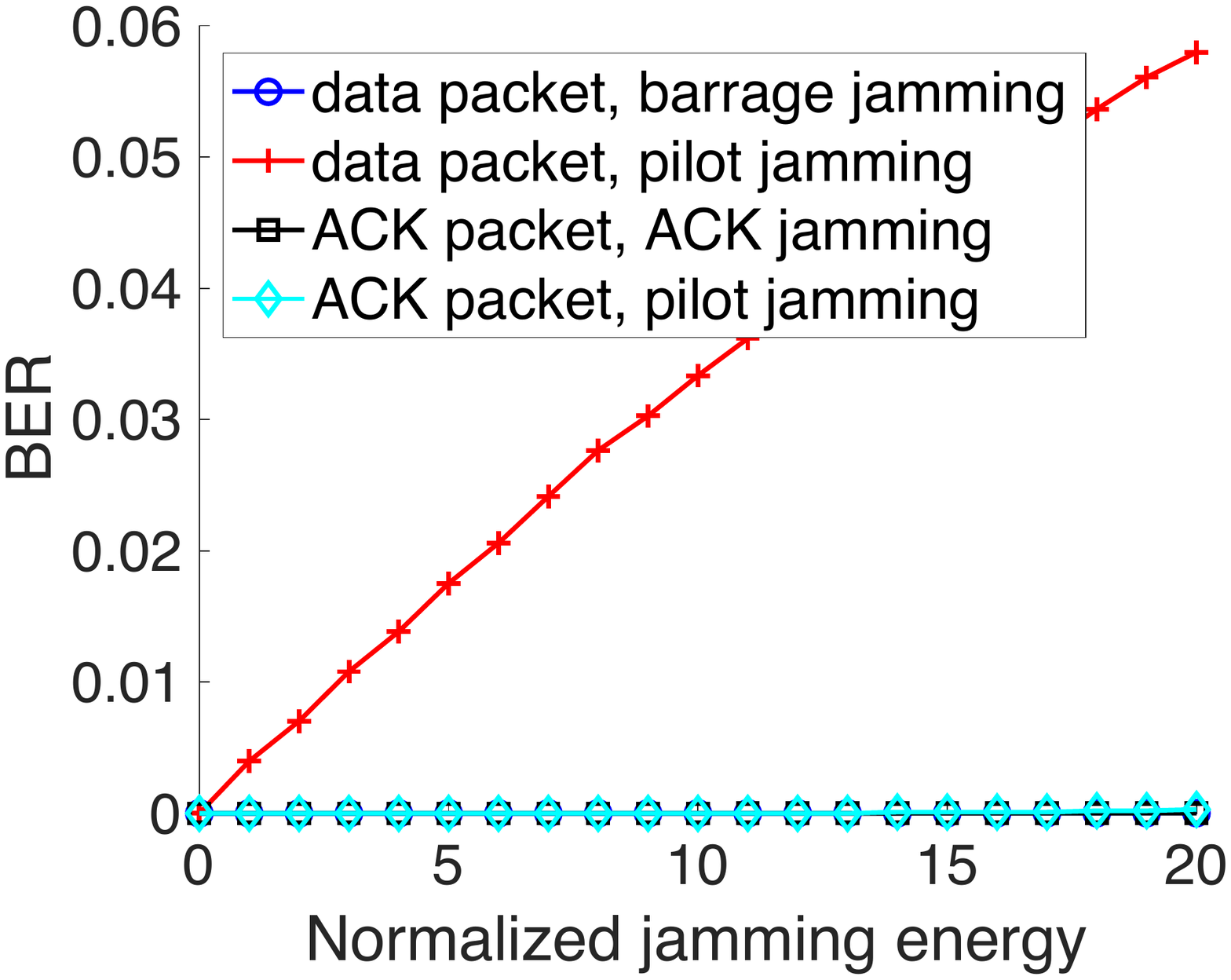} & 
        \includegraphics[width=0.23\textwidth]{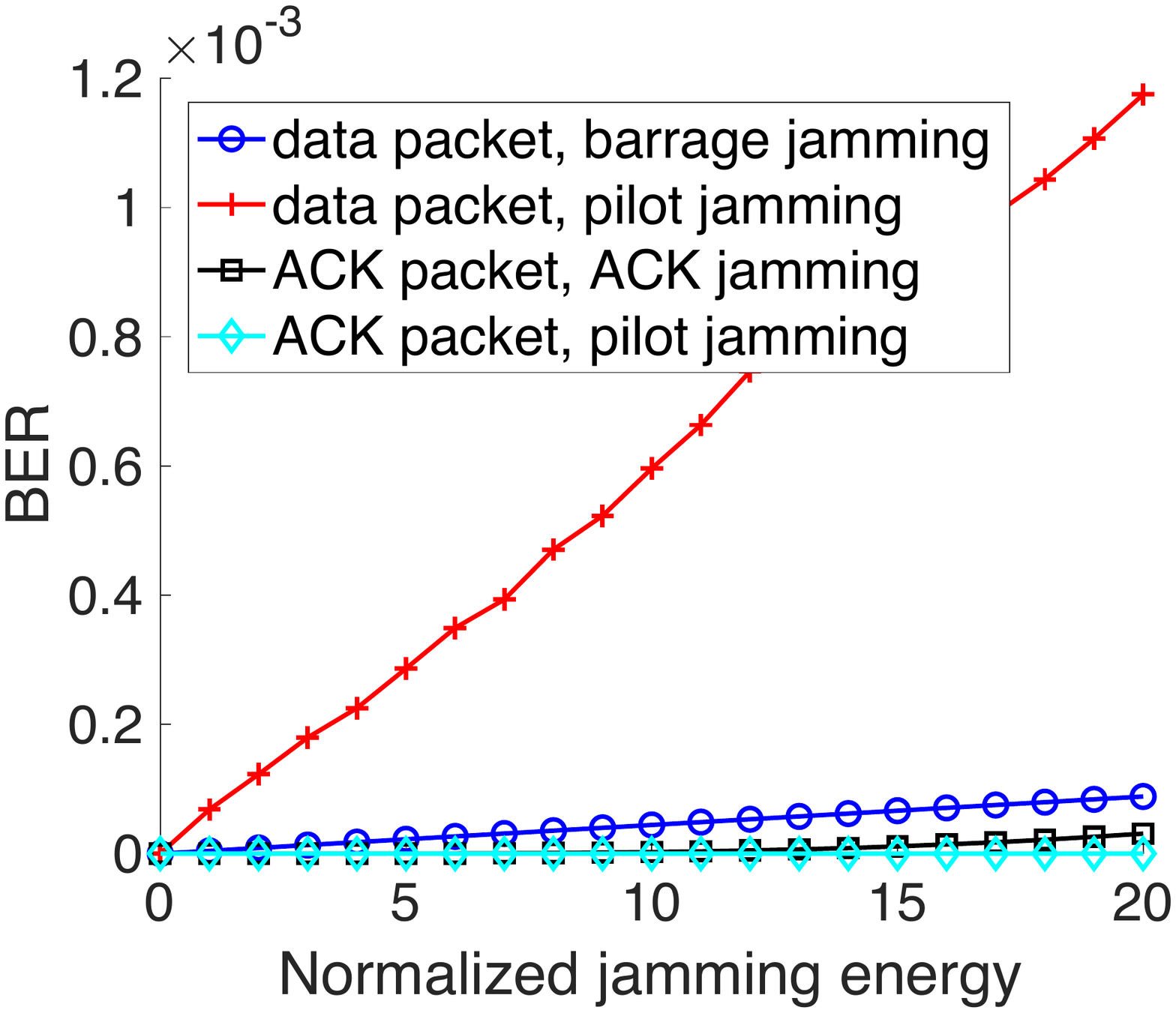} & 
        \includegraphics[width=0.23\textwidth]{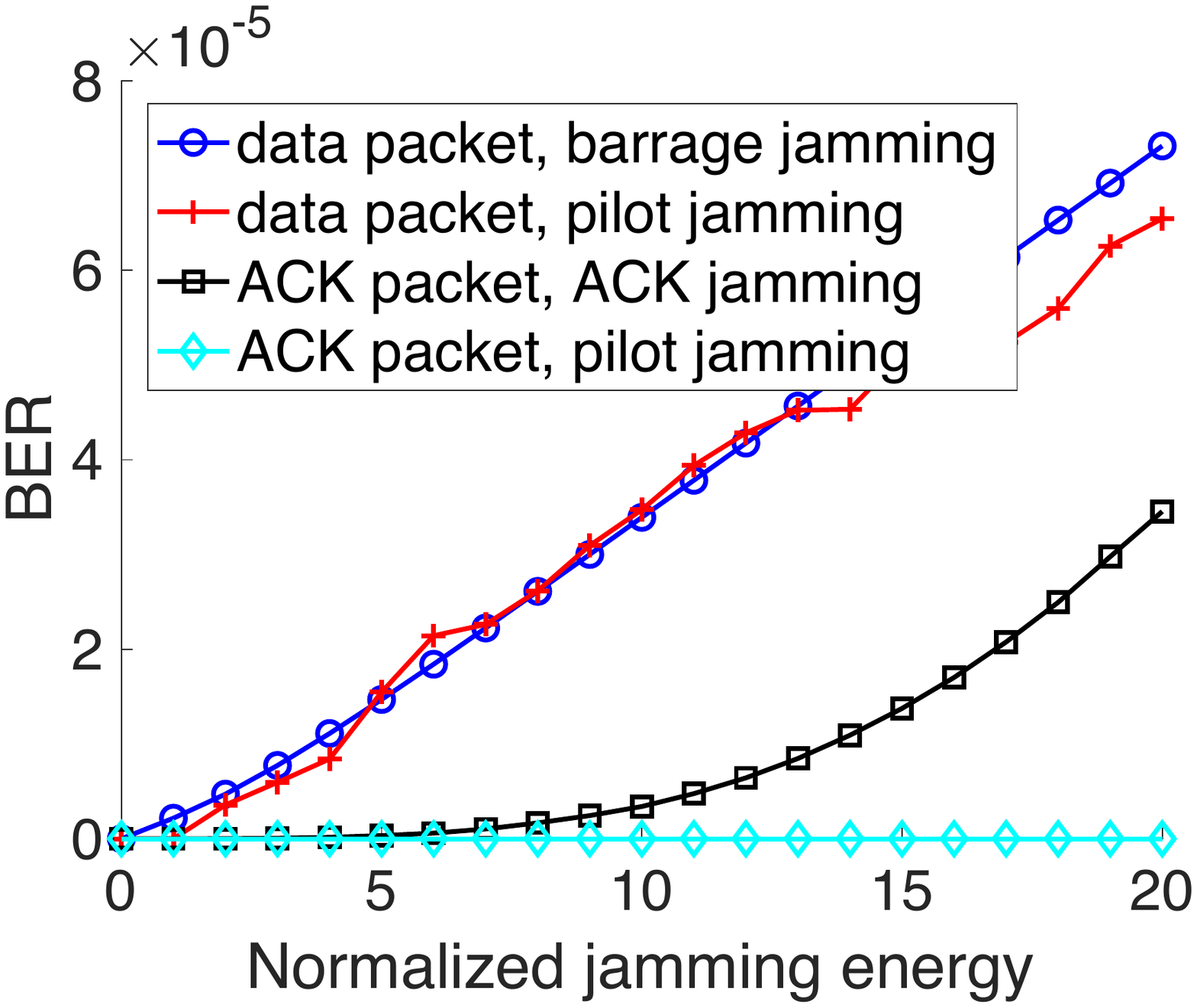} \\
        \vspace{-2mm}
        \small (a) & \small (b) & \small (c) & \small (d)
    \end{tabular}
    \vspace{-2mm}
    \caption{\small BER for different jamming schemes with pilot length $K$ = (a) $4$; (b) $16$; (c) $128$; (d) $512$. Modulation: BPSK}
    \label{fig:ber}
\end{figure*}

\begin{figure*}
    \centering
    \begin{tabular}{ccc}
    \includegraphics[width=0.25\textwidth]{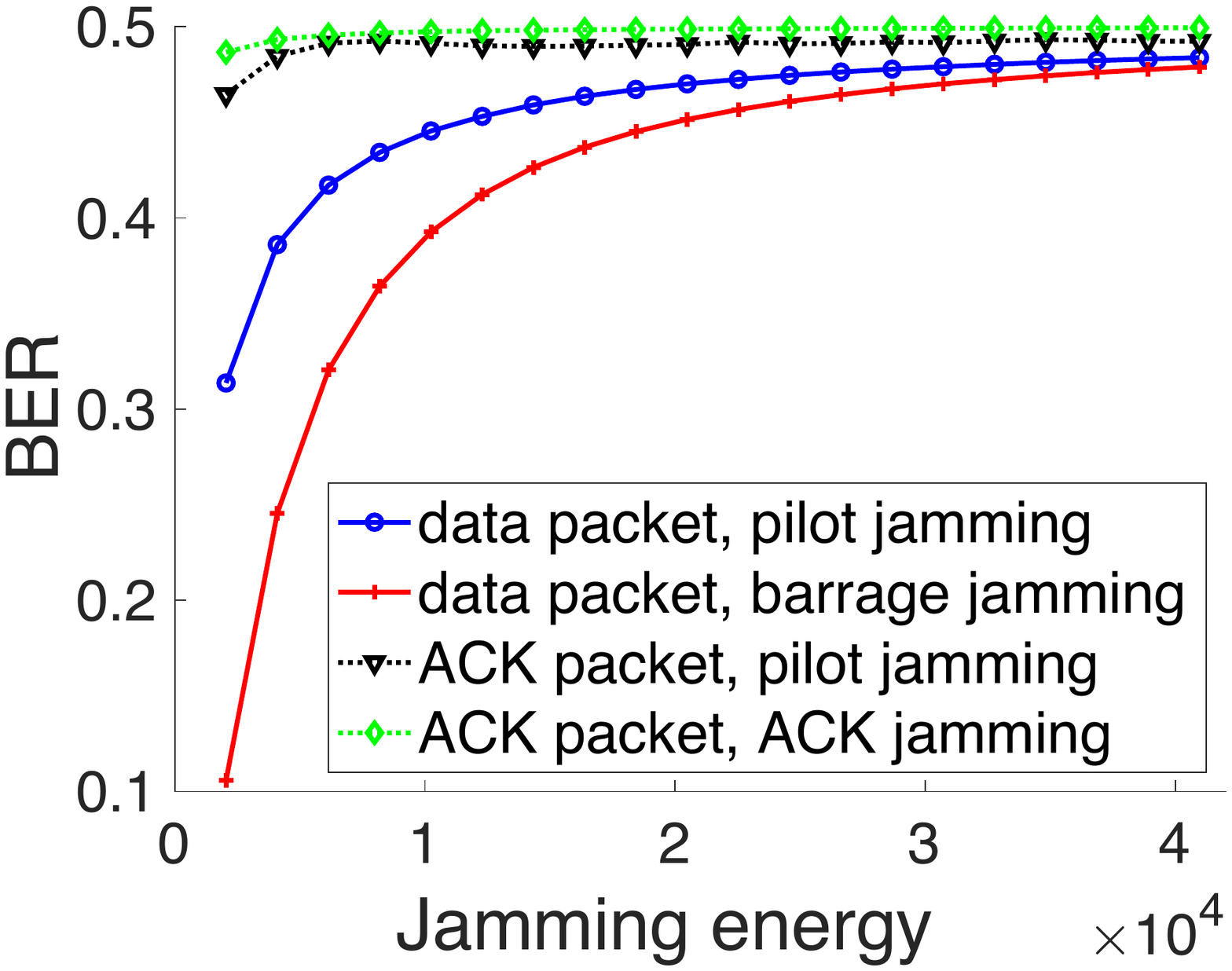} &
    \includegraphics[width=0.25\textwidth]{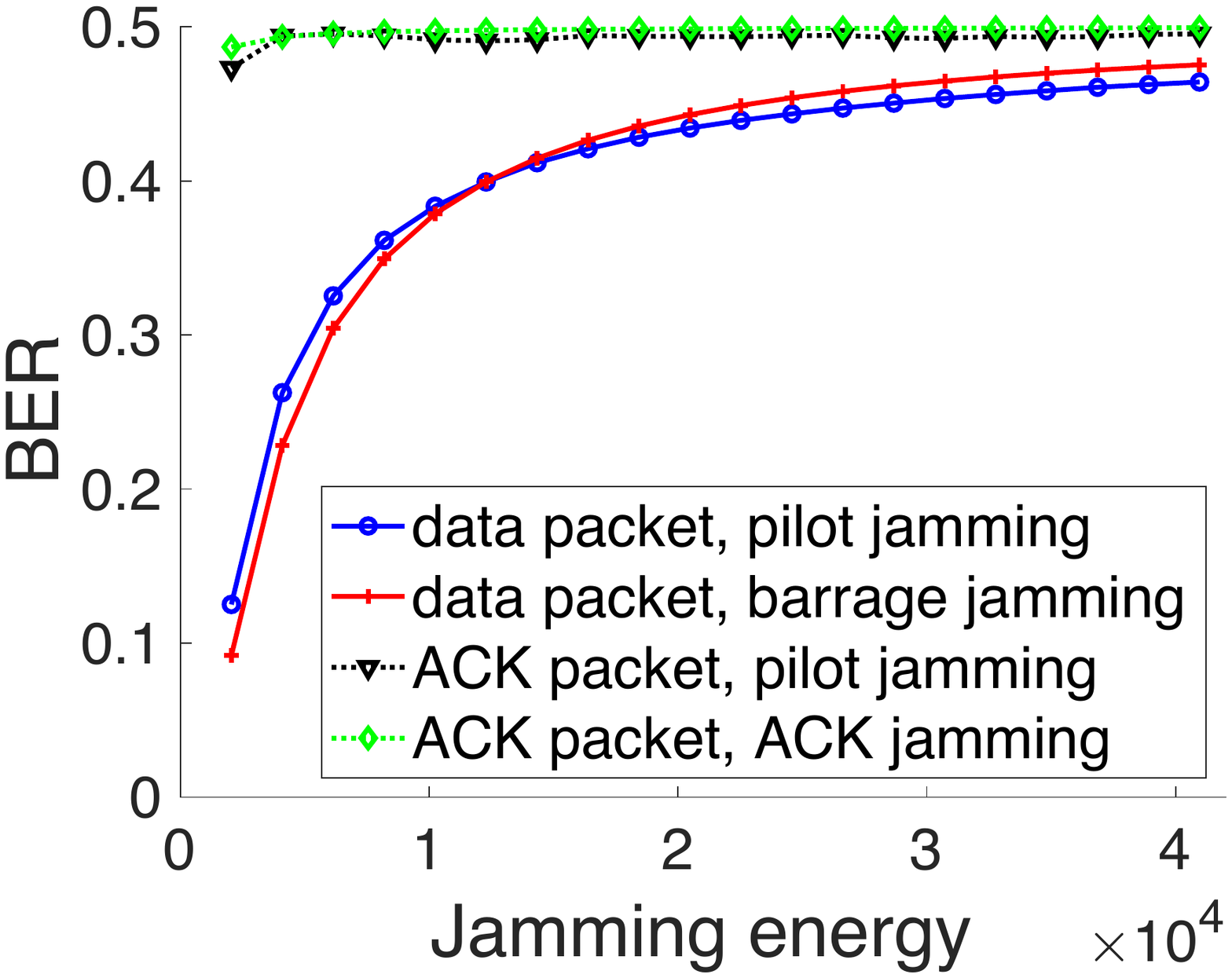} &
    \includegraphics[width=0.25\textwidth]{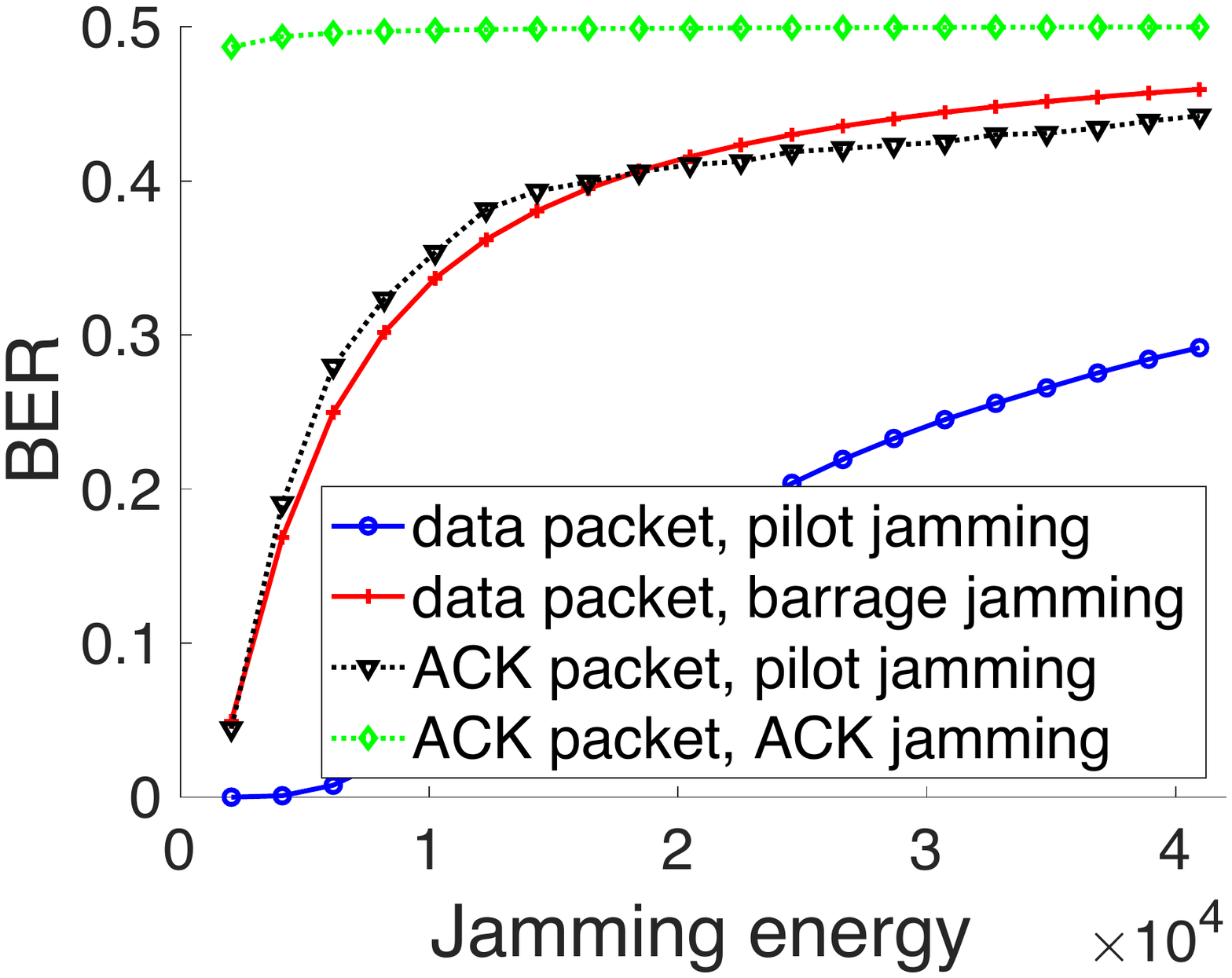}\\
    \vspace{-2mm}
    \small (a) & \small (b) & \small (c)
    \end{tabular}
    \vspace{-2mm}
    \caption{\small BER as function of jamming energy (normalized to pre-amplified symbol energy) for pilot length (a) 64; (b) 128; and (c) 512.}
    \label{fig:testbed_BER}
\end{figure*}

For the sake of simplicity, we use QPSK modulation for both data and ACK transmission. Data packets are long $1024$ symbols, while the pilot length are long $64, 128,$ and $512$ symbols -- we choose long pilots because they are also used for synchronization. The length of ACK is set to $128$ symbols. We explore $20$ levels of jamming energy, such that the interference level at the receiver side is significant enough to create bit errors. Specifically, the jamming energy is set to $[0.5:0.5:10] \times 4096$, normalized by the \emph{pre-amplified} symbol energy for the data transmission. 

Fig. \ref{fig:testbed_BER} confirm simulations results. Indeed, for data packets, pilot jamming prevails with short pilots, but becomes less efficient than barrage jamming when the pilot length is increased to a certain level. Although the results do not match precisely Theorem \ref{thm1} on the crossover point, we point out that Theorem \ref{thm1} is derived based on the assumption of negligible channel estimation errors, which does not hold in the testbed experiments. Furthermore, the relation between pilot jamming efficiency and pilot length is still valid. For ACK jamming, we observe that ACK jamming effectively destructs ACK transmission, achieving a BER around $0.5$.



\subsection{Evaluation of Learning Algorithm}\label{sec:eva}

In our experiments, we assume there are two routes available to the transmitter, while the set of MCSs includes BPSK-1/2, QPSK-9/16, 16QAM-3/4, and 64QAM-3/4 -- we will refer to them as MCS 1 to 4. All the MCSs comply to the PER-SINR function in (\ref{eq:per}), with the parameters $a_z$ and $b_z$ borrowed from \cite{Liu2004TWC}. The state set is then composed of $8$ states, indexed as in Table \ref{tbl:state_set}.

\begin{table}[!h]
    \centering
    \begin{tabular}{|c|c|}
        \hline
        \cellcolor{blue!10}\textbf{State Index} & \cellcolor{green!10}\textbf{Description} \\
        \hline
        1 - 4 & Route 1, with MCS 1 - 4 \\        \hline
        5 - 8 & Route 2, with MCS 1 - 4 \\        \hline
        \hline
       \cellcolor{blue!10} \textbf{Action Index} & \cellcolor{green!10}\textbf{Description} \\
        \hline
        1 - 20 & Barrage, energy 1 - 20 \\        \hline
        21 - 40 & Pilot, energy 1 - 20, length 4 \\        \hline
        41 - 60 & Pilot, energy 1 - 20, length 16 \\        \hline
        61 - 80 & Pilot, energy 1 - 20, length 128 \\        \hline
        81 - 100 & Pilot, energy 1 - 20, length 512 \\        \hline
        101 - 120 & ACK, energy 1 - 20 \\        \hline
        \hline
    \end{tabular}
        \caption{State and action sets.\vspace{-0.5cm}}
    \label{tbl:state_set}
\end{table}

The jamming normalized energy is set from $10$ to $200$ with a step of $10$. The jammer needs to decide the pilot jamming length $T_p$, since the jammer is not aware of the real value, which is fixed to $128$ in the simulation. We discretize the pilot jamming length to values of $4$ levels, $4$, $16$, $128$, and $512$. We simulate transmission with dynamic MCS and route adaptation (\textit{i.e.}, the transmitter adjusts its MCS and route according to the link throughput). To better illustrate the algorithm convergence in different states, we intentionally let the transmitter perform state transition every $1000$ steps. 


\begin{figure}[!h]
    \vspace{-2mm}
    \centering
    \begin{tabular}{cc}
    \includegraphics[width=0.23\textwidth]{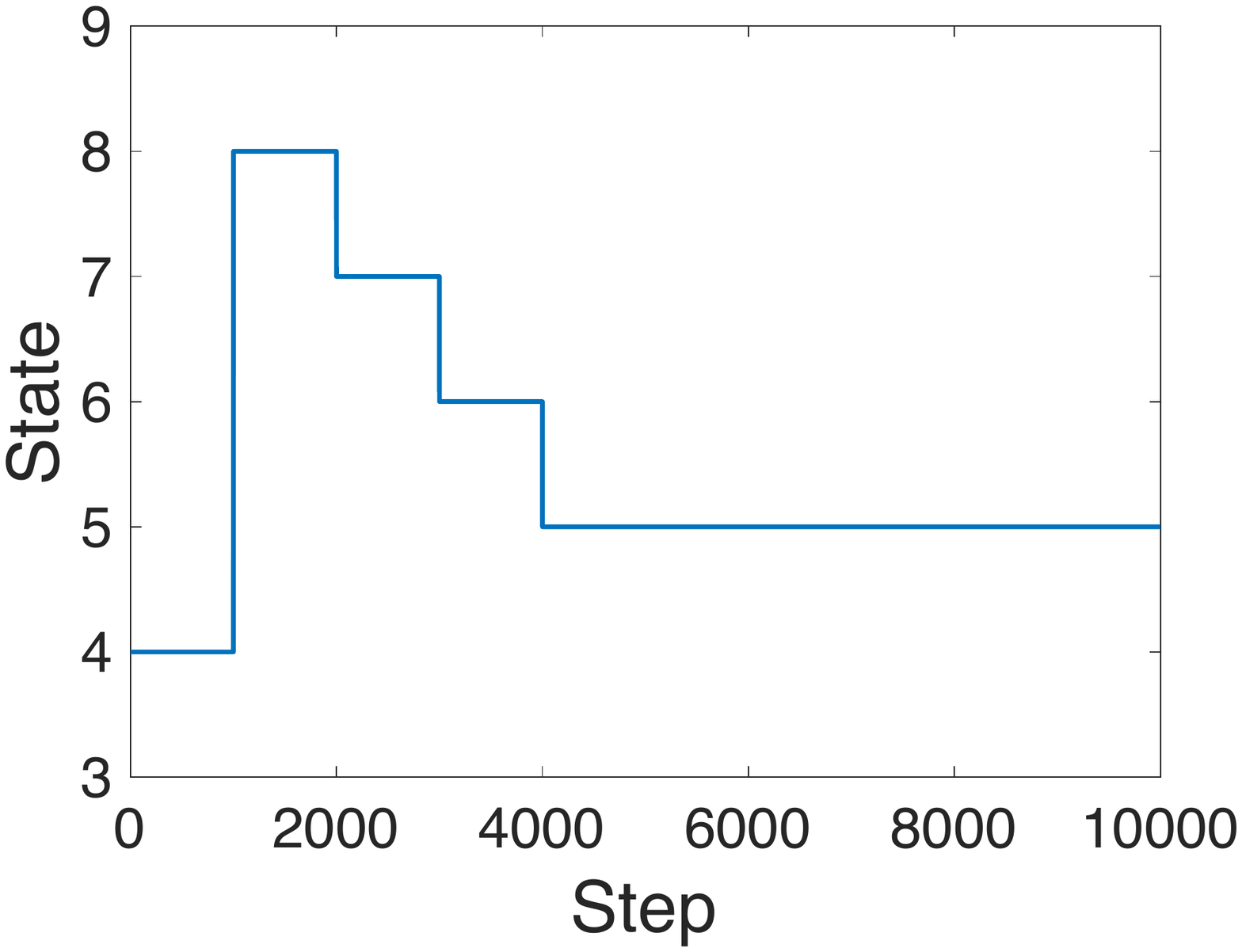} & 
    \includegraphics[width=0.23\textwidth]{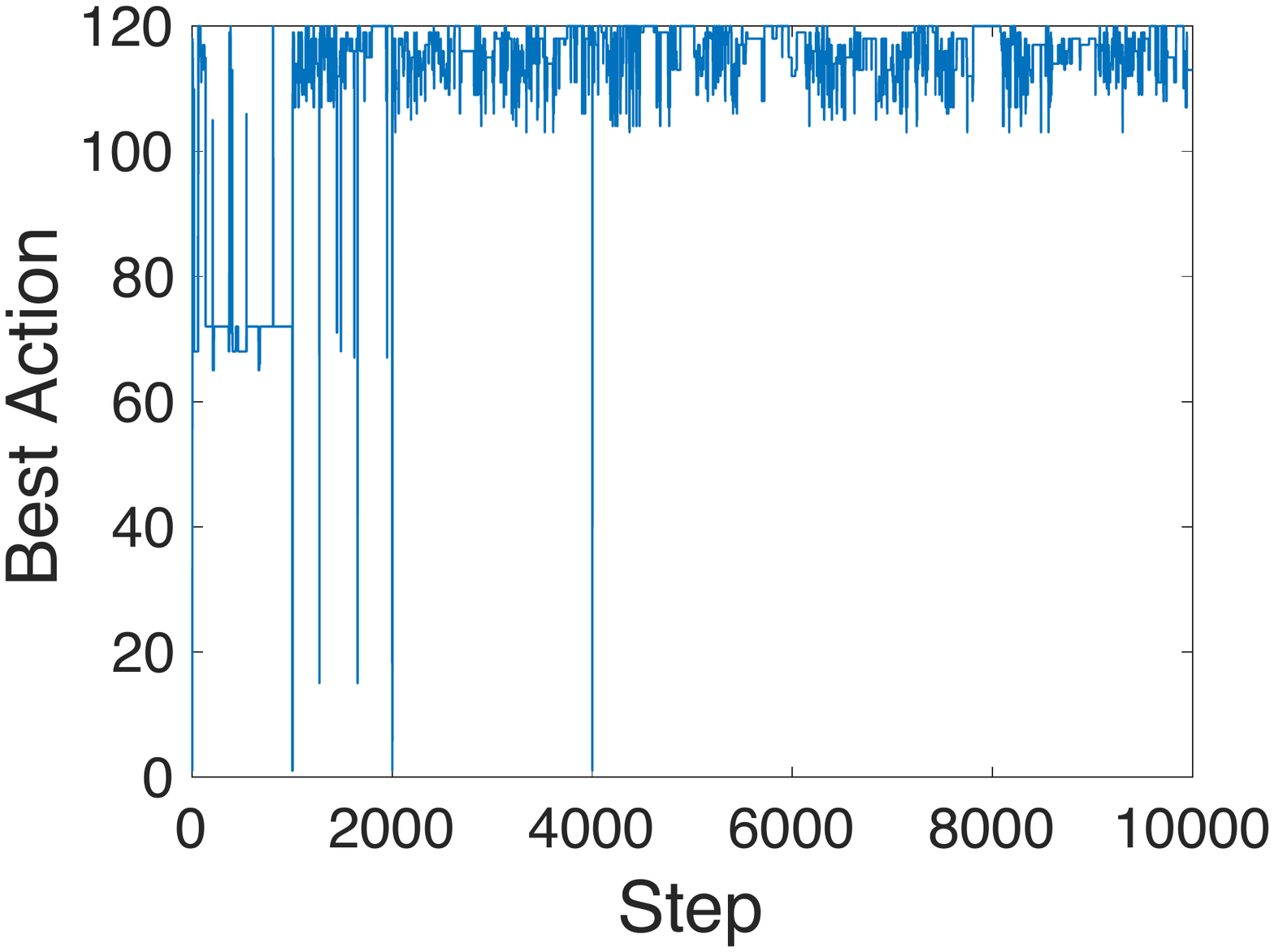} \\
    \small (a) & \small (b) \\
    \includegraphics[width=0.23\textwidth]{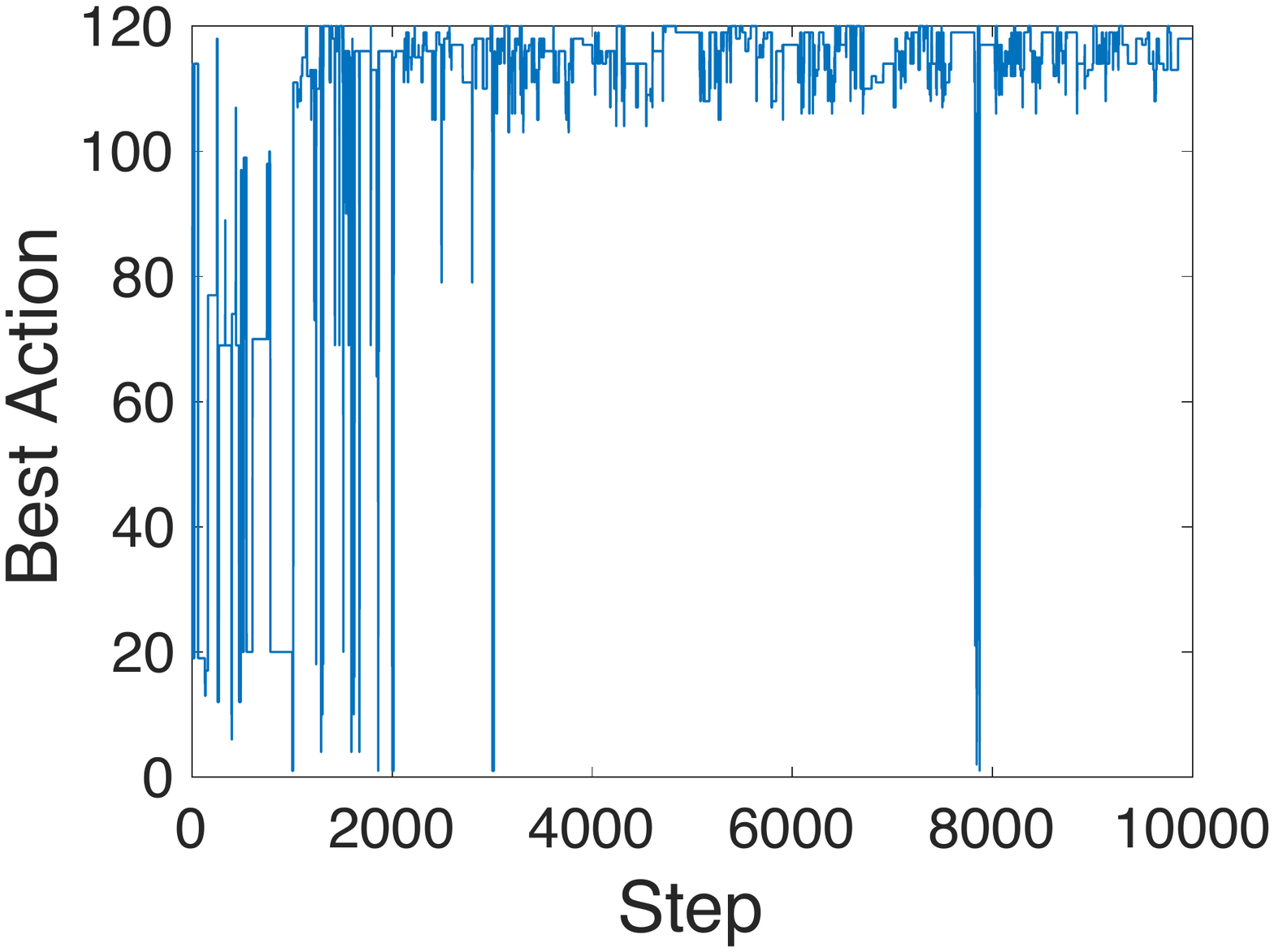} &
    \includegraphics[width=0.23\textwidth]{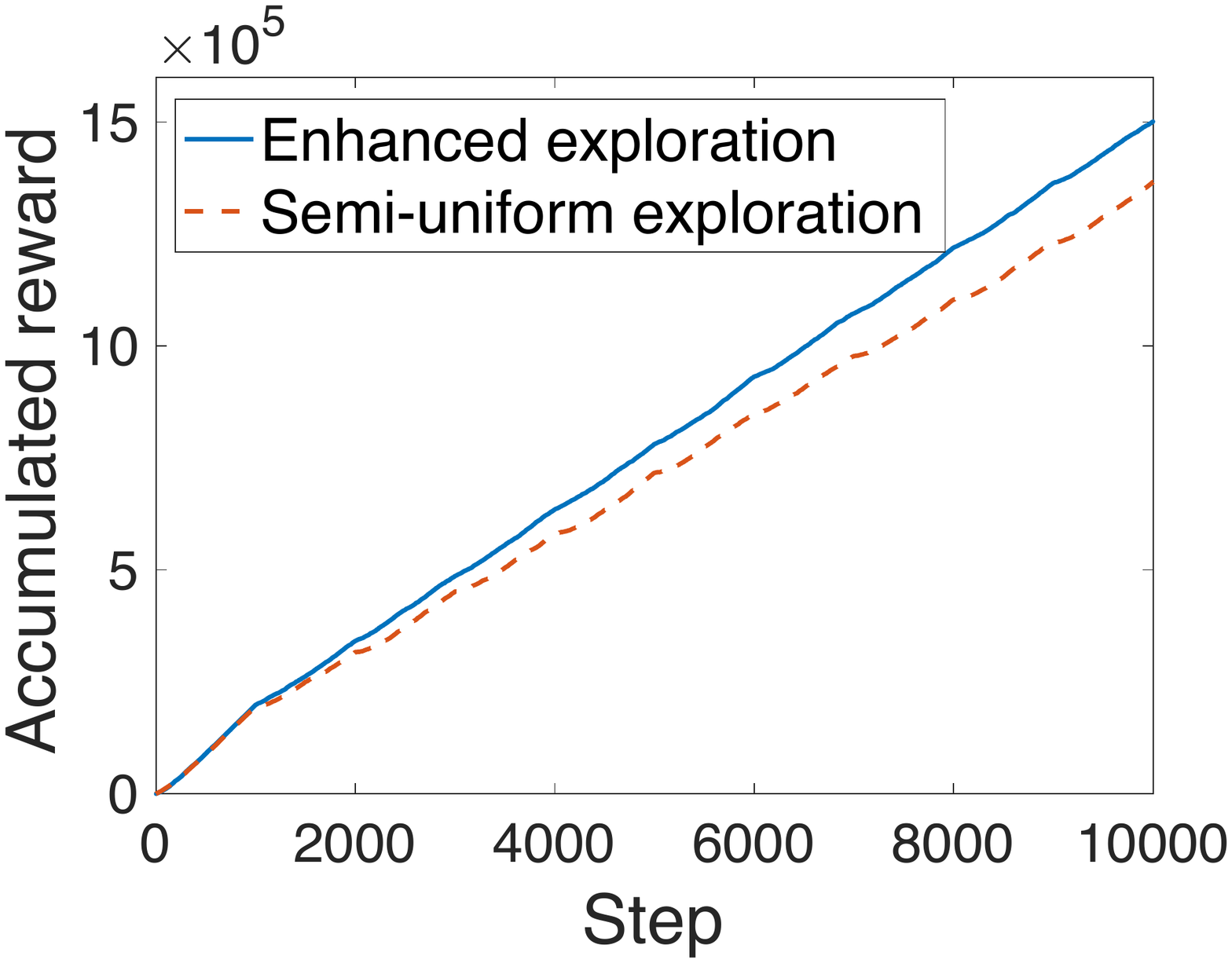} \\
    \small (c) & \small (d)\\
    \end{tabular}
    \vspace{-2mm}
    \caption{\small Learning-based jamming: (a) state transition; (b) best action, enhanced exploration; (c) best action, semi-uniform exploration; (d) accumulated reward.}
    \vspace{-2mm}
    \label{fig:q_alg}
\end{figure}

The resultant state transition is shown in Fig. \ref{fig:q_alg} (a), and the best action at each step is shown in Fig. \ref{fig:q_alg} (b). We can observe that the transmitter starts with route 1 and the highest MCS of 64QAM-3/4. Since receiver 1 is close to the jammer and the pilot length $K = 128$ satisfies (\ref{eq:cond1}), the best action is pilot jamming with length $128$. This is confirmed by looking at the best action in the initial steps in Fig. \ref{fig:q_alg} (b). After step $1000$, the transmitter switches to route 2, and since receiver 2 is far away while the transmitter is close to the jammer, the best action becomes ACK jamming, agreeing with the results shown in Fig. \ref{fig:q_alg} (b). The success of the jammer in degrading throughput is proven by the state adaptation process, which shows that the transmitter adapts a more reliable MCS in each update, until the most reliable one, \textit{i.e.}, BPSK-1/2 is used. The jammer's best action remains ACK jamming since ACK packets are not affected by the MCS adaption. To show the benefits of the enhanced exploration method discussed in Section \ref{sec:alg}, we run simulations with the same settings and semi-uniform exploration. State adaptation is the same as in Fig. \ref{fig:q_alg} (a), but the convergence is slower. Fig. \ref{fig:q_alg} (c) concludes that the fluctuation in best action is clearly larger than that for enhanced exploration before step $2000$. The benefit of enhanced exploration is clearer in Fig. \ref{fig:q_alg} (d), where the accumulated reward is shown for both exploration methods.

\section{Conclusions} \label{sec:con}
In this paper, we have identified a set of potential targets for advanced jamming. Then, we have formally modeled jamming schemes aimed at each of these vulnerabilities, and conducted a rigorous analysis  resulting in insightful theorems that unveil optimal jamming strategies in scenarios of interest. Them, we have designed a reinforcement learning based algorithm that allows the jammer to adapt its jamming strategy to dynamic environments. The theorems were used to enhance the efficiency of action exploration in the learning process. We have verified the theorems and proved the effectiveness of the proposed algorithm through extensive simulations and experiments.

\section*{Acknowledgements}

We sincerely thank our shepherd Xiaowen Gong and the anonymous reviewers for their constructive feedback, which has helped increase significantly the quality of our manuscript. This material is based upon work funded by the Air Force Research Laboratory (AFRL) under Contract No. FA8750-15-3-6001-NU and by the National Science Foundation (NSF) under Grant CNS-1618727. Any opinions, findings and conclusions or recommendations expressed in this material are those of the authors and do not necessarily reflect the views of the AFRL or the NSF.

\end{document}